\newcolumntype{A}{>{\centering\arraybackslash}m{2.8cm}}
\newcolumntype{B}{>{\centering\arraybackslash}m{6cm}}
\newcolumntype{L}{>{\centering\arraybackslash}m{2cm}}
\newcommand{\eat}[1]{}
\newcommand{\db}{PlatoDB}
\newcommand{\tree}{segment tree} 
\newcommand{\treeCap}{Segment Tree} 
\newcommand{\treeCapAll}{SEGMENT TREE} 
\newtheorem{theorem}{Theorem}
\newtheorem{example}{Example}
\newenvironment{compact_item}
{\setlength{\leftmargini}{2em}
\begin{itemize}
  \setlength{\labelsep}{1em}
  \setlength{\itemsep}{.1em}
  \setlength{\parskip}{0pt}
  \setlength{\parsep}{0pt}}
{\end{itemize}}
\begin{document}






\title{Efficient Approximate Query Answering over Sensor Data with Deterministic Error Guarantees\thanks{Supported by NSF BIGDATA 1447943.}}

\numberofauthors{6}
\author{
%
%
\alignauthor Jaqueline Brito\\
       \affaddr{UC San Diego}\\
       \email{jabrito@cs.ucsd.edu}
\alignauthor Korhan Demirkaya\\
       \affaddr{UC San Diego}\\
       \email{kdemirka@cs.ucsd.edu}
\alignauthor Boursier Etienne\\
       \affaddr{UC San Diego}\\
       \email{eboursier@cs.ucsd.edu}
\and
\alignauthor Yannis Katsis\\
       \affaddr{UC San Diego}\\
       \email{ikatsis@cs.ucsd.edu}
\alignauthor Chunbin Lin\\
       \affaddr{UC San Diego}\\
       \email{chunbinlin@cs.ucsd.edu}
\alignauthor Yannis Papakonstantinou\\
       \affaddr{UC San Diego}\\
       \email{yannis@cs.ucsd.edu}
}

\maketitle


\begin{abstract}
With the recent proliferation of sensor data, there is an increasing need for the efficient evaluation of analytical queries over multiple sensor datasets. The magnitude of such datasets makes exact query answering infeasible, leading researchers into the development of approximate query answering approaches. However, existing approximate query answering algorithms are not suited for the efficient processing of queries over sensor data, as they exhibit at least one of the following shortcomings: (a) They do not provide deterministic error guarantees, resorting to weaker probabilistic error guarantees that are in many cases not acceptable, (b) they allow queries only over a single dataset, thus not supporting the multitude of queries over multiple datasets that appear in practice, such as correlation or cross-correlation and (c) they support relational data in general and thus miss speedup opportunities created by the special nature of sensor data, which are not random but follow a typically smooth underlying phenomenon.  

To address these problems, we propose \db; a system that exploits the nature of sensor data to compress them and provide efficient processing of queries over multiple sensor datasets, while providing deterministic error guarantees. \db\ achieves the above through a novel architecture that (a) at data import time pre-processes each dataset, creating for it an intermediate hierarchical data structure that provides a hierarchy of summarizations of the dataset together with appropriate error measures and (b) at query processing time leverages the pre-computed data structures to compute an approximate answer and deterministic error guarantees for ad hoc queries even when these combine multiple datasets.

As a result of its novel architecture, \db\ exhibits speedups of 1-3 orders of magnitude compared to systems that use the entire sensor datasets to compute exact query answers during experiments performed on real sensor datasets. 

\end{abstract}

\section{Introduction}
\label{sec:introduction}

The increasing affordability of sensors and storage has recently led to the proliferation of sensor data in a variety of domains, including transportation, environmental protection, healthcare, fitness, etc. These data are typically of high granularity and as a result have substantial storage requirements, ranging from a few GB to many TB. For instance, a Formula 1 produces 20GB of data during two 90-minute practice sessions~\footnote{http://www.zdnet.com/article/formula-1-racing-sensors-data-speed-and-the-internet-of-things/}, while a commercial aircraft may generate 2.5TB of data per day~\footnote{http://www.datasciencecentral.com/profiles/blogs/that-s-data-science-airbus-puts-10-000-sensors-in-every-single}.

The magnitude of sensor datasets creates a significant challenge when it comes to query evaluation. Running analytical queries over the data (such as finding correlations between signals), which typically involve aggregates, can be very expensive, as the queries have to access significant amounts of data. This problem becomes worse when queries combine in ad hoc ways multiple sensor datasets. For instance, consider a data analytics scenario, where a user wants to combine (a) a location dataset providing the location of users for different points in time (as recorded by their smartphone's GPS) and (b) an air pollution dataset recording the air quality at different points in time and space (as recorded by air quality sensors) to compute the average quality of air inhaled by each user over a certain time period\footnote{This is a real example encountered during the DELPHI project conducted at UC San Diego, which studied how health-related data about individuals, including large amounts of sensor data, can be leveraged to discover the determinants of health conditions \cite{katsis2013delphi}.}. Answering this query requires accessing all location and air pollution measurements in the time period of interest, which can be substantial for long periods. To solve this problem, researchers have proposed approximate query processing algorithms \cite{JermaineAPD07,AgarwalMPMMS13,WuOT10,BabcockDM04,PansareBJC11,PansareBJC11,PottiP15,LazaridisM01} that approximate the query result by looking at a subset of the data.

However, existing approaches have the following shortcomings when it comes to the query processing of multiple sensor data sets:
\begin{compact_item}
  \item \emph{Lack of deterministic error guarantees.} Most query approximation algorithms provide probabilistic error guarantees. While this is sufficient for some use cases, it does not cover scenarios where the user needs deterministic guarantees ensuring that the returned answer is within the specified error bounds. 
  \item \emph{Lack of support of queries over multiple datasets.} Many techniques, such as wavelets, provide error guarantees only for queries over a single dataset. The errors can be arbitrarily large for queries ranging over multiple datasets, as they are unaware of how multiple datasets interact with each other.
  \item \emph{Data agnosticism.} The majority of existing techniques works for relational data in general and does not leverage compression opportunities that come from the fact that sensor data are not random in nature but follow typically smooth continuous phenomena.
\end{compact_item}

To overcome the limitations, we design the \emph{\db} system, which leverages the nature of sensor data to compress them and provide efficient processing of analytical queries over multiple sensor datasets, while providing deterministic error guarantees. In a nutshell, \db\ operates as follows: When initiated, it preprocesses each time series dataset and builds for it a binary tree structure, which provides a hierarchy of summarizations of segments of the original time series. A node in the tree structure summarizes a segment of time series through two components: (i) a compression function estimating the data points in the segment, and (ii) error measures indicating the distance between the compressed segment and the original one. The lower level nodes refers to finer-grained segments and smaller errors. During runtime, \db\ takes as input an aggregate query over potentially multiple sensor datasets together with an error or time budget and utilizes the tree structure for each of the datasets involved in the query to obtain an approximate answer together with a deterministic error guarantee that satisfies the time/error budget.\\ 

\begin{figure*}[t]
\centering
\includegraphics[width=1\textwidth]{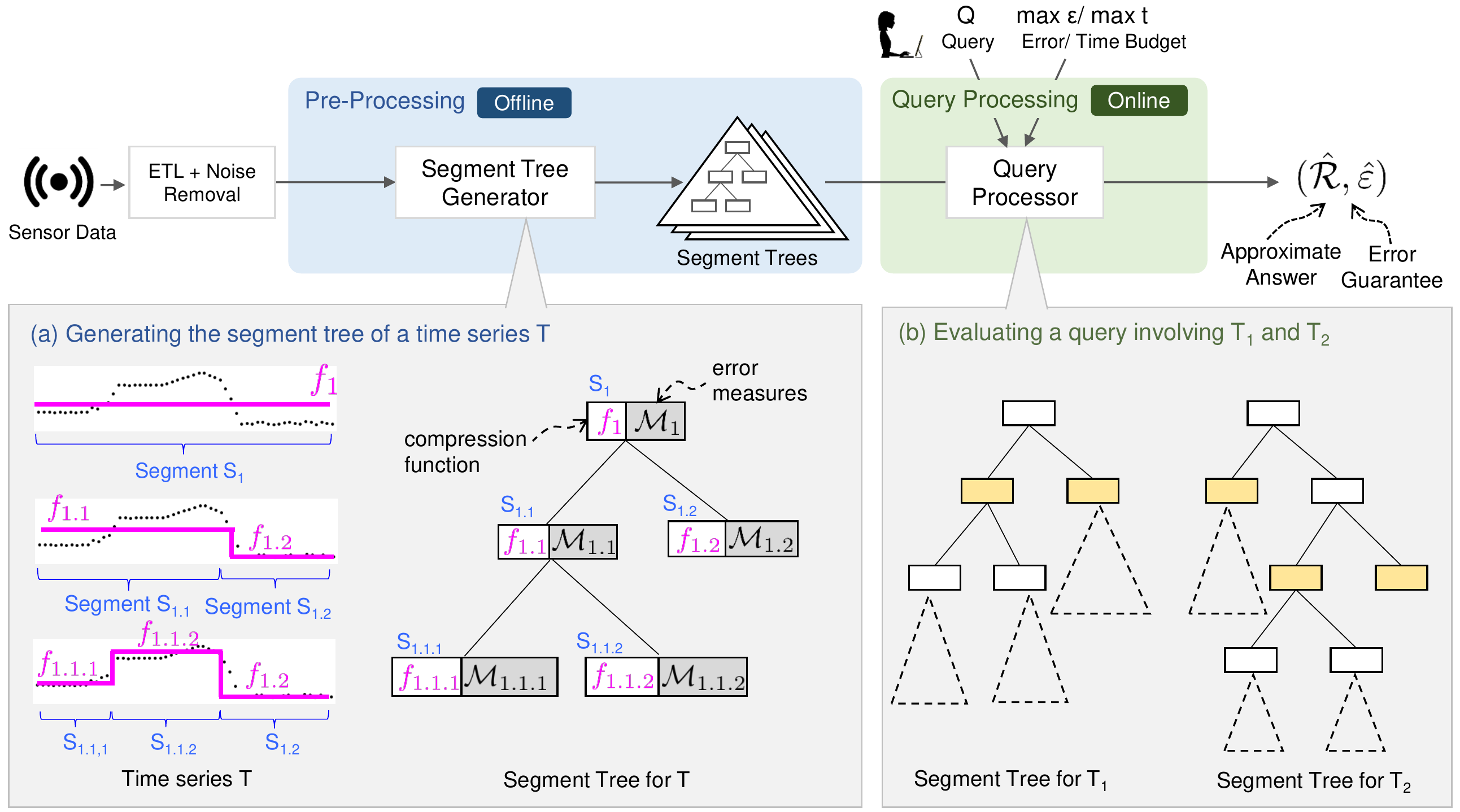}
\caption{\db's architecture, including details on the \tree\ generation and query processing.}
\label{fig:architecture}
\end{figure*}

\textbf{Contributions.} In this work, we make the following contributions:

\begin{itemize}
\item We define a query language over sensor data, which is powerful enough to express most common statistics over both single and multiple time series, such as variance, correlation, and cross-correlation (Section~\ref{sec:DataQueries}).
\item We propose a novel tree structure (structurally similar to hierarchical histograms) and a corresponding tree generation algorithm that provides a hierarchical summarization of each time series independently of the other time series. The summarization is based on the combination of arbitrary compression functions that can be reused from the literature together with three novel error measures that can be used to provide deterministic error guarantees, regardless of the employed compression function (Section~\ref{sec:index}). 
\item We design an efficient query processing algorithm operating on the pre-computed tree structures, which can provide deterministic error guarantees for queries ranging over multiple time series, even though each tree refers to one time series in isolation. The algorithm is based on a combination of error estimation formulas that leverage the error measures of individual time series segments to compute an error for an entire query (Section~\ref{Sec:error_estimator}) together with a tree navigation algorithm that efficiently traverses the time series tree to quickly compute an approximate answer that satisfies the error guarantees (Section~\ref{sec:QueryProcessing}).
\item We conduct experiments on two real-life datasets to evaluate our algorithms. The results show that our algorithm outperforms the baseline by 1-3 orders of magnitude (Section~\ref{experiments}).
\end{itemize}

\section{System Architecture}
\label{Sec:architecture}

Figure~\ref{fig:architecture} depicts \db's architecture. \db\ operates in two steps, performed at two different points in time. At data import time, \db\ pre-processes the incoming time series data, creating a \tree\ structure for each time series. At query execution time, it leverages these \tree s to provide an approximate query answer together with deterministic error guarantees. We next describe these two steps in detail.\\

\noindent \textbf{Off-line Pre-Processing.} At data import time, \db\ takes as input a set of time series. The time series are created from the raw sensor data by the typical Extract-Transform-Load (ETL) scripts potentially combined with de-noising algorithms, which is outside the focus of this paper.

For each such time series, \db's \emph{\treeCap\ Generator} creates a hierarchy of summarizations of the data in the form of a \emph{\tree}; a tree, whose nodes summarize the data for segments of the original time series. Intuitively, the structure of the \tree\ corresponds to a way of splitting the time series recursively into smaller segments: The root $S_{1}$ of the tree corresponds to the entire time series, which can be split into two subsegments (generally of different length), represented by the root's children $S_{1.1}$ and $S_{1.2}$. The segment corresponding to $S_{1.1}$ can be in turn split further into two smaller segments, represented by the children $S_{1.1.1}$ and $S_{1.1.2}$ of $S_{1.1}$ and so on. Since each node provides a brief summarization of the corresponding segment, lower levels of the tree provide a more precise representation of the time series than upper levels. As we will see later, this hierarchical structure of  segments is crucial for the query processor's ability to adapt to a wide variety of error/time budgets provided by the user. When the user is willing to accept a large error, the query processor will mostly use the top levels of the trees, providing a quick response. On the other hand, if the user demands a lower error, the algorithm will be able to satisfy the request by visiting lower levels of the \tree s (which exact nodes will be visited also depends on the query and the interplay of the time series in it). Leveraging the trees, \db\ can even provide users with continuously improving approximate answers and error guarantees, allowing them to stop the computation at any time, similar to works in online aggregation ~\cite{hellerstein1997online,CondieCAHES10,PansareBJC11}.

Each node of the tree summarizes the corresponding segment through two data items: (a) a \emph{compression function}, which represents the data points in a segment in a compact way (e.g., through a constant~\cite{KeoghCPM01} or a line~\cite{keogh1997fast}), and (b) a set of \emph{error measures}, which are metrics of the distance between the data point values estimated by the compression function and the actual values of the data points. As we will see, the query processor uses the compression function and error measures of the \tree\ nodes to produce an approximate answer of the query and the error guarantees, respectively. Interestingly, \db's internals are agnostic of the compression function used. As we will discuss in Section~\ref{sec:index}, \db's query processor works independently of the employed compression functions, allowing the system to be combined with all popular compression techniques. For instance, in our example above we utilized the Piecewise Aggregate Approximation (PAA) ~\cite{KeoghCPM01}, which returns the average of a set of values. However, we could have used other compression techniques, such as the  Adaptive Piecewise Constant Approximation (APCA)~\cite{keogh2001locally}, the Piecewise Linear Representation (PLR)~\cite{keogh1997fast}, or others.\\

\noindent\emph{Remark}. It is important to note that the \tree\ is not necessarily a balanced tree. \db\ decides whether a segment need to be split based on how close the values derived from the compression function are to the actual values of the segment. \db{} splits the segment when the difference is large. Intuitively, this means that the \tree\ contains more nodes for parts of the domain where the time series is irregular and/or rapidly changing, and fewer nodes for the smooth parts. \db{} treats the problem of finding the splitting positions as an \textit{optimization} problem, splitting at positions that can bring the largest error reduction. We will present the  \tree\ generator algorithms in Section \ref{sec:index}.

\begin{example}
Figure \ref{fig:architecture}(a) shows the \tree\ for a time series $T$. The root node $S_1$ of the tree (corresponding to the segment covering the entire time series) summarizes this segment through two items: a set of parameters describing a compression function $f_1$ (in this case the function returns the average $v$ of the values of the time series and can therefore be described by the single value $v$) and a set of error measures $M_1$ (the details of error measures will be presented in Section~\ref{sec:index}). This entire segment is split into two subsegments $S_{1.1}$ and $S_{1.2}$, giving rise to the identically-named tree nodes. Note that the tree is not balanced. Segment $S_{1.2}$ is not split further as its function $f_{1.2}$ correctly predicts the values within the corresponding segment. In contrast, the segment $S_{1.1}$ displays great variability in the time series' values and is thus split further into segments $S_{1.1.1}$ and $S_{1.1.2}$.
\end{example}

\noindent\textbf{On-line Query Processing.}
At query evaluation time, \db's \emph{Query Processor} receives a query and a time or error budget and leverages the pre-processed \tree s to produce an approximate query answer and a corresponding error guarantee satisfying the provided budget.

To compute the answer and error guarantee, \db{} traverses in parallel in a top-down fashion the \tree s of all time series involved in the query.  At any step of this process, it uses the compression function and error measures in the current accessed nodes to calculate an approximate query answer and the corresponding error. If it has not reached yet the time/error budget (i.e., if there is still time left or if the current error is still greater than the error budget), \db{} greedily chooses among all the currently accessed nodes the one, whose children nodes would yield the greatest error reduction and uses them to replace their parent in the answer and error estimation. Otherwise, \db{} stops accessing further nodes of the \tree s and outputs the currently computed approximate answer and error. Query processing is described in detail in Sections \ref{Sec:error_estimator} and \ref{sec:QueryProcessing}.\\

\noindent\emph{Remark}. It is important to note that, in contrast to existing approximate query answering systems, \db\ can answer queries that span across different time series, even though the \tree s were pre-processed for each time series individually. As we will see, the fact that the \tree s were generated for each time series individually, leads to interesting problems at query processing time, such as aligning the segments of different time series and reasoning about how these segments interact to produce the query answer and error guarantees. Finally, it is also important to note that \db\ adapts to the provided error budget by accessing different number of nodes. Larger error budgets lead to fewer node accesses, while smaller error budgets require more node accesses.

\begin{example}
Consider a query $Q$ involving two time series $T_1$ and $T_2$ and an error budget $\varepsilon_{max}=10$. Figure~\ref{fig:architecture}(b) shows how the query processing algorithm uses the pre-computed \tree s of the two time series. \db{} first accesses the root nodes of both \tree s in parallel and computes the current approximate query answer $\hat{\mathds{R}}$ and error $\hat{\varepsilon}$, using the compression function and error measures in the root nodes. Let's assume that $\hat{\varepsilon}=20$. Since $\hat{\varepsilon}>\varepsilon_{max}$, \db{} keeps traversing the trees by greedily choosing a node and replacing it by its children, so that the error reduction at each step is maximized. This process continues until the error budget is satisfied. For instance, assume that using the yellow shaded nodes in Figure~\ref{fig:architecture}(b) \db{} obtains an error $\hat{\varepsilon}=6<\varepsilon_{max}$. Then \db{} stops traversing the trees and outputs the approximate answer and the error $\hat{\varepsilon}=6$. Note that none of the descendants of the shaded nodes is touched, resulting in big performance savings.
\end{example}

As a result of this architecture, \db\ achieves speedups of 1-3 orders of magnitude in query processing of sensor data compared to approaches that use the entire dataset to compute exact query answers (more details are included in \db's experimental evaluation in Section \ref{experiments}).

\section{Data and Queries}
\label{sec:DataQueries}
Before describing the \db\ system, we first present its data model and query language.\\
\begin{table*}
\centering
\begin{tabular}{|l|l|l|l|}
\hline
Statistic & Symbol & Definition & Query Expression\\
\hline
 Mean & $E(T)$ & $\sum\limits_{i=1}^nd_i$ & $\frac{Sum(T,1,n)}{n}$\\
\hline
Variance & $Var(T)$ & $\sum\limits_{i=1}^n(d_i-E(T))^2$ & $Sum(Times(T,T),1,n)-\frac{Sum(T,1,n)\times Sum(T,1,n)}{n}$\\
\hline
Covariance & $Cov(T_1,T_2)$ & $\frac{\sum\limits_{i=1}^n((d_i^{(1)}-E(T_1))(d_i^{(2)}-E(T_2)))}{n-1}$ & $\frac{Sum(Times(T_1,T_2),1,n)}{n-1}-\frac{Sum(T_1,1,n)\times Sum(T_2,1,n)}{n(n-1)}$\\
\hline
Correlation & $Corr(T_1,T_2)$ & $\frac{\sum\limits_{i=1}^n((d_i^{(1)}-E(T_1))(d_i^{(2)}-E(T_2))}{\sqrt{\sum\limits_{i=1}^n(d_i^{(1)}-E(T_1))^2\sum\limits_{i=1}^n(d_i^{(2)}-E(T_2))^2}}$ & $\frac{Sum(Times(T_1,T_2))-\frac{1}{n}Sum(T_1,1,n)\times Sum(T_2,1,n)}{\sqrt{Var(T_1)Var(T_2)}}$\\
\hline
Cross-correlation & $Coss(T_1,T_2,\ell)$ & $\frac{\sum\limits_{i=1}^n((d_i^{(1)}-E(T_1))(d_{i+\ell}^{(2)}-E(T_2))}{\sqrt{\sum\limits_{i=1}^n(d_i^{(1)}-E(T_1))^2\sum\limits_{i=1}^n(d_{i+\ell}^{(2)}-E(T_2))^2}}$ & $\frac{Sum(Times(T_1,T_2))-\frac{1}{n}Sum(T_1,1,n)\times Sum(T_2,1+l,n+l)}{\sqrt{Var(T_1)Var(T_2)}}$\\
\hline
\end{tabular}
\caption{Query expressions for common statistics.}
\label{tbl:common-statistics-queries}
\end{table*}

\noindent\textbf{Data Model.}
For the purpose of this work, a time series $T$=$[(t_1, d_1)$, $(t_2, d_2)$, $\ldots$, $(t_n, d_n)]$ is a sequence of (time, data point) pairs ($t_i$, $d_i$), such that the data point $d_i$ was observed at time $t_i$.  We follow existing work~\cite{GoldinK95} to normalize and standardize the time series so that all time series are in the same domain and have the same resolution. Since all time series are aligned, for ease of exposition we omit the exact time points and use instead the index of the data points whenever we need to define a time interval. For instance, we will denote the above time series simply as $T$=$(d_1, d_2,...,d_n)$, and use $[i, j]$ to refer to the time interval $[t_i, t_j]$. A subsequence of a time series is called a time series \textit{segment}. For example $S=(5.01, 5.06)$ is a segment of the time series $T=(5.05$, $\underline{5.01,5.06}$, $5.06$, $5.08)$.\\

\begin{figure}
\[
\begin{array}{|lll|c|}
\hline
\multicolumn{4}{|l|}{\cellcolor{gray!15} \text{Query Expression $(Q)$}}\\
\hline
\mathit{Q} & \rightarrow & Ar                      &\\
\hline
\multicolumn{4}{l}{}\\
\hline
\multicolumn{4}{|l|}{\cellcolor{gray!15} \text{Arithmetic Expression $(Ar)$}}\\
\hline
Ar & \rightarrow & \text{number} & \\
			   & |           & Agg	&\\
			   & |           & Ar \otimes Ar & \text{where } \otimes\in\{+,-,\times,\div\}\\
\hline
\multicolumn{4}{l}{}\\
\hline
\multicolumn{4}{|l|}{\cellcolor{gray!15} \text{Aggregation Expression $(Agg)$}}\\
\hline
Agg            & \rightarrow & \textsf{Sum}(T,\ell_s,\ell_e)   & \sum\limits_{i=\ell_s}^{\ell_e}d_{i}\\
\hline
\multicolumn{4}{l}{}\\
\hline
\multicolumn{4}{|l|}{\cellcolor{gray!15} \text{Time Series Expression $(T)$}}\\
\hline
T               & \rightarrow           & \text{base time series} & \\
                & |             & \textsf{SeriesGen}(\upsilon, n)    & (\underbrace{\upsilon,\upsilon,...,\upsilon}_{n})\\
               & |           & \textsf{Plus}(T, T)         & (d_1^{(1)} + d_1^{(2)}, \ldots, d_n^{(1)} + d_n^{(2)})\\
               & |           & \textsf{Minus}(T, T)        & (d_1^{(1)} - d_1^{(2)}, \ldots, d_n^{(1)} - d_n^{(2)})\\
               & |           & \textsf{Times}(T, T)        &(d_1^{(1)} * d_1^{(2)}, \ldots, d_n^{(1)} * d_n^{(2)})\\
\hline
\end{array}
\]
\caption{Grammar of query expressions.}
\label{fig:query}
\end{figure}

\noindent\textbf{Query Language.} \db\ supports queries whose main building blocks are aggregation queries over time series. Figure~\ref{fig:query} shows the formal definition of the query language and Table~\ref{tbl:common-statistics-queries} lists several common statistics that can be expressed in this language. 

A query expression $Q$ is an arithmetic expression of the form $Arr_1 \otimes Arr_2 \otimes \ldots Arr_n$, where $\otimes$ are the standard arithmetic operators ($+, - \times, \div$) and $Arr_i$ is either an arithmetic literal or an aggregation expression over a time series. An aggregation expression \textsf{Sum}$(T, l_s, l_e)$ over a time series $T$ computes the sum of all data points of $T$ in the time interval $[l_s, l_e]$. Note that the time series that is aggregated could either be a base time series or a derived time series that was computed from a set of base time series through a set of time series operators. \db\ allows a series of time series operators, including \textsf{Plus}$(T_1, T_2)$, $\textsf{Minus}(T_1, T_2)$, and $\textsf{Times}(T_1, T_2)$ (which return a time series that has data points computed by adding, subtracting, and multiplying the respective data points of the original time series, respectively), as well as $\textsf{SeriesGen}(v, n)$, which takes as input a value $v$ and a counter $n$ and creates a new time series that contains $n$ data points with the value $v$.

Note that the query language can be used to express many common statistics over time series encountered in practice and all the queries we encountered during the DELPHI project conducted at UC San Diego, which explored how health-related data about individuals, including large amounts of sensor data, can be leveraged to discover the determinants of health conditions and which served as the motivation for this work \cite{katsis2013delphi}. These include the mean and variance of a single time series, as well as the covariance, correlation, and cross-correlation between two time series. Table \ref{tbl:common-statistics-queries} shows how common statistics can be expressed in \db's query language.

\section{\treeCapAll}
\label{sec:index}

As explained in Section \ref{Sec:architecture}, at data import time, \db\ creates for each time series a hierarchy of summarizations of the series in the form of the \emph{\tree}. In this Section we first explain the structure of the tree and then describe the \tree\ generation algorithm.

\subsection{\treeCap\ Structure}

Let $T = (d_1, \ldots, d_n)$ be a time series. The \tree\ of $T$ is a binary tree whose
nodes summarize segments of the time series with nodes higher up the tree summarizing large segments and nodes lower down the tree summarizing progressively smaller segments. In particular, the root node summarizes the entire time series $T$. Moreover, for each node $n$ of the tree summarizing a segment $S_i = (d_i, \ldots, d_j)$ of $T$, its left and right children nodes $n_l$ and $n_r$ summarize two subsegments $S_l = (d_i, \ldots, d_k)$ and $S_r = (d_{k+1}, \ldots, d_j)$, respectively, which form a partitioning of the original segment $S_i$. As we will see in Section \ref{sec:QueryProcessing}, this hierarchical structure allows \db\ to adapt to varying error/time budgets by only accessing the parts of the tree required to achieve the given error/time budget.

At each node $n$ corresponding to segment $S_i = (d_i, \ldots, d_j)$, \db{} summarizes the segment $S_i$ by keeping two types of measures: (a) a description of a compression function that is used to approximately represent the time series values in the segment and (b) a set of error measures describing how far the above approximate values are from the real values. As we will see in Sections \ref{Sec:error_estimator} and \ref{sec:QueryProcessing}, \db\ uses at query processing time the compression function and error measures stored in each node to compute an approximate answer of the query and deterministic error guarantees, respectively. We next describe the compression functions and error measures stored within each \tree\ node in detail.\\

\noindent\textbf{Segment Compression Function.} Let $S = (d_1, \ldots, d_n)$ be a segment. \db\ summarizes its contents through a compression function $f$ used by the user.
\db\ supports the use of any of the compression functions suggested in the literature~\cite{KeoghCPM01,keogh2001locally,keogh1997fast,FaloutsosRM94,ChanF99,NgC04}. Examples include but are not limited to the Piecewise Aggregate Approximation (PAA) ~\cite{KeoghCPM01}, the Adaptive Piecewise Constant Approximation (APCA)~\cite{keogh2001locally}, the Piecewise Linear Representation (PLR)~\cite{keogh1997fast}, the Discrete Fourier Transformation (DFT)~\cite{FaloutsosRM94}, the Discrete Wavelet Transformation (DWT)~\cite{ChanF99}, and the Chebyshev polynomials (CHEB)~\cite{NgC04}.

To describe the function, \db\ stores in the segment node parameters describing the function. These parameters depend on the type of the function. For instance, if $f$ is a Piecewise Aggregate Approximation (PAA), estimating all values within a segment by a single value $b$, then the parameter is just a single value $b$. On the other hand, if $f$ is a Piecewise Linear Approximation (PLR), estimating the values in the segment through a line $ax+b$, then the function parameters are the coefficients $a$ and $b$ of the polynomial used to describe the line.

In the rest of the document, we will refer directly to the compression function $f$ (instead of the parameters that are used to describe it). Given a segment $(d_1, \ldots, d_n)$, we will use $f(i)$ to denote the value for element $d_i$ of the segment, as derived by $f$.\\

\noindent\textbf{Segment Error Measures.} In addition to the compression function, \db\ also stores a set of error measures for each time series segment $S = (d_1, \ldots, d_n)$. \db\ stores the following three error measures:

\begin{itemize}
  \item $\mathcal{L}$ : The sum of the absolute distances between the original and the compressed time series (also known as the Manhattan or $L_1$ distance), i.e., $\mathcal{L}=\sum\limits_{i=1}^{n}|d_i-f(i)|$.
  \item $d^*$ : The maximum absolute value of the original time series, i.e., $d^*=max\{|d_i|~|~ 1\leq i\leq n\}$.
  \item $f^*$ : The maximum absolute value of the compressed time series, i.e., $f^*=max\{|f(i)|~|~ 1\leq i\leq n\}$.
\end{itemize}

\begin{example}
For instance, consider a segment $S=(5.12,$ $5.09,5.07,5.04)$ summarized through the PAA compression function $f=5.08$ (i.e., $f(1)=f(2)=f(3)=f(4)=5.08$). Then $\mathcal{L}=|5.12-5.08|+|5.09-5.08|+|5.07-5.08|+|5.04-5.08|=0.1$, $d^*=max\{5.12,5.09,5.07,5.04\}=5.12$ and $f^*=max\{5.08,5.08,5.08,5.08\}=5.08$.
\end{example}

As we will see in Section \ref{Sec:error_estimator}, the above three error measures are sufficient to compute deterministic error guarantees for any query supported by the system, \emph{regardless} of the employed compression function $f$. This allows administrators to select the compression function best suited to each time series, without worrying about computing the error guarantees, which is automatically handled by \db.

\subsection{\treeCap\ Generation}

We next describe the algorithm generating the \tree. To build the tree, the algorithm has to decide how to build the children nodes from a parent node; i.e., how to partition a segment into two non-overlapping subsegments. Each possible splitting point will lead to different children segments and as a result to different errors when \db\ uses the children segments to answer a query at query processing time. Ideally, the splitting point should be the one that minimizes the error among all possible splitting points. However, since \db\ supports ad hoc queries and since each query may benefit from a different splitting point, there is no way for \db\ to choose a splitting point that is optimal for all queries.\\

\noindent\textbf{\treeCap\ Generation Algorithm.} Based on this observation, \db\ chooses the splitting point that minimizes the error for the basic query that simply computes the sum of all data points of the original segment. In particular, the \tree\ generation algorithm starts from the root and proceeding in a top-down fashion given a segment $S = (d_1, \ldots, d_n)$, selects a splitting point $d_k$ that leads into two subsegments $S_l = (d_1, \ldots, d_k)$ and $S_r = (d_{k+1}, \ldots, d_n)$ so that the sum of the Manhattan distances of the new subsegments $\mathcal{L}_{S_l} + \mathcal{L}_{S_r}$ is minimized.

The algorithm stops further splitting down a segment $S$, when one of the following two conditions hold: (i) When the Manhattan distance $\mathcal{L}_{S}$ of the segment is smaller than a threshold $\tau$ or (ii) when he size of the segment is below a threshold $\kappa$. The choice between conditions (i) and (ii) and the values of the corresponding thresholds $\tau$ and $\kappa$ is specified by the system administrator.

Since the algorithm needs time proportional to the size of a segment to compute the splitting point of a single segment and it repeats this process for every non-leaf tree node, it exhibits a worst-time complexity of $O(mn)$, where $n$ is the size of the original time series (i.e., the number of its data points) and  $m$ number of nodes in the resulting \tree.\\


\noindent\textbf{Discussion.} Note that by deciding independently how to split each individual segment into two subsegments, the \tree\ generation algorithm is a greedy algorithm, which even though makes optimal local decisions for the basic aggregation query, may not lead to optimal global decisions. For instance, there is no guarantee that the $k$ nodes that exist at a particular level of the \tree\ correspond to the $k$ nodes that minimize the error of the basic aggregation query. The literature contains a multitude of algorithms that can provide such a guarantee for a given $k$; i.e., algorithms that can, given a time series $T$ and a number $k$, produce $k$ segments of $T$ that minimize some error metric. Examples include the optimal algorithm of \cite{bellman1961approximation}, as well as approximation algorithms with formal guarantees presented in \cite{terzi2006efficient}. However, all these algorithms have very high worst-time complexity that makes them prohibitive for the large number of data points typically found in sensor datasets and are therefore not considered in this work. Though several heuristic segmentation algorithms exist, such as the \textit{Sliding Windows}~\cite{ShatkayZ96}, the \textit{Top-down}~\cite{KeoghP98} and the \textit{Bottom-Up}~\cite{KeoghP99} algorithm, similar do our greedy algorithm, they do not provide any formal guarantees.

Finally, note that the tree generated by the above algorithm will in general be unbalanced. Intuitively, the algorithm will create more nodes and corresponding tree levels to cover segments that contain data points that are more irregular and/or rapidly changing, utilizing fewer nodes for smooth segments.


\section{Computing Approximate Query Answers and Error Guarantees}
\label{Sec:error_estimator}

Given pre-computed segment trees for time series $T_1, \ldots, T_n$, \db\ answers ad hoc queries over the time series by accessing their segment trees. In particular, to answer a given query $Q$ under an error/time budget, \db\ navigates the segment trees of the time series involved in $Q$, selects segment nodes (or simply segments) that satisfy the budget, and computes an approximate answer for $Q$ together with deterministic error guarantees.

We will next present the query processing algorithm. For ease of exposition, we will start by describing how \db\ computes an approximate query answer and the associated error guarantees assuming that the segment nodes have been already chosen, and will explain in Section \ref{sec:QueryProcessing} how \db\ traverses the tree to choose the segment nodes.\\

\noindent\textbf{Approximate query answering problem under given segments.} Formally, let $T_1, \ldots, T_k$ be time series, such that time series $T_i$ is partitioned into segments $S_i^1, \ldots S_i^n$. Given (a) these segments and the associated measures as described above and (b) a query $Q$ over the time series $T_1, \ldots, T_k$, we will show how \db\ computes an approximate query answer $\hat{\mathds{R}}$ and an estimated error $\hat{\varepsilon}$, such that the approximate query answer $\hat{\mathds{R}}$ is guaranteed to be with $\pm\hat{\varepsilon}$ of the accurate query answer ${\mathds{R}}$\footnote{Accurate answer means running queries over raw data. But note that, in this work, we can given estimate errors wihout computing the accurate answers.}, i.e., $|\mathds{R}-\hat{\mathds{R}}|\leq\hat{\varepsilon}$.

For ease of exposition, we next first describe the simple case where each time series $T_i$ contains a single segment perfectly aligned with the single segment of the other series, before describing the general case, where each time series $T_i$ contains multiple segments, which may also not be perfectly aligned with the segments of the other time series.


\subsection{Single Time Series Segment}
\label{Sec:error_estimator_single}
Let $T_1, \ldots,T_k$ be $k$ time series with single aligned segments, i.e., $T_i$ is approximated by a single segment $S_i$. Also let $f_i$ be the compression function and $(\mathcal{L}_i, d_i^*, f_i^*)$ the error measures of segment $S_i$, respectively. To compute the approximate answer and error guarantees of a query $Q$ over $T_1, \ldots, T_k$ using the single segments $S_1, \ldots, S_k$, \db\ employs an algebraic approach computing in a bottom-up fashion for each algebraic operator $op$ of $Q$ the approximate answer and error guarantees for the subquery corresponding to the subtree rooted at $op$.

This algebraic approach is based on formulas that for each algebraic query operator, given an approximate query answer and error for the inputs of the operator, provide the corresponding query answer and error for the output of the operator. Figure \ref{fig:error_estimators} shows the formulas employed by \db\ for each algebraic query operator supported by the system. Note that the output signatures differ between operators. This is due to the different types of operators supported by \db, as explained next. Recall from Section \ref{sec:DataQueries} that \db's query language consists of three types of operators: (i) time series operators, (ii) aggregation operator, and (iii) arithmetic operators. While time series operators output a time series, aggregation and arithmetic operators output a single number. As a result, the formulas used for answer and error estimation, treat these two classes of operators differently: For time series operators, the formulas return, similarly to the input time series, the compression function and error measures of the output time series. For aggregation and arithmetic operators on the other hand, which return a single number and not an entire time series, the formulas return simply a single approximate answer and estimated error. Figure \ref{fig:error_estimators} shows the resulting formulas.~\footnote{Out of the formulas, the most involved are the output measure estimation formulas of the \textsf{Times} operator. More details on how they were derived can be found in Appendix \ref{appendix:times}.}

\begin{figure}[t]
\small
\raggedright
\textbf{Time Series Operators}
\vspace*{0.2cm}
\renewcommand{\tabcolsep}{0.7mm}
\begin{tabular}{|l||c||c|c|c|} \hline
\textbf{Operator}                 & \cellcolor{gray!25} \textbf{Compr.} &\multicolumn{3}{c|}{\cellcolor{gray!25} \textbf{Output}} \\
                                  & \cellcolor{gray!25} \textbf{Func.} &\multicolumn{3}{c|}{\cellcolor{gray!25} \textbf{Error Measures}} \\
\cline{2-5}
                                  & \cellcolor{gray!13} $f$ & \cellcolor{gray!13} $\mathcal{L}$  &   \cellcolor{gray!13}  $d^*$     &  \cellcolor{gray!13} $f^*$\\\hline
\textsf{SeriesGen($\upsilon, n$)} & $\upsilon$ &$0$            &    $\upsilon$& $\upsilon$ \\\hline
\textsf{Plus($T_1,T_2$)}          & $f_1+f_2$ & $\mathcal{L}_{1}+\mathcal{L}_{2}$ & $d^{*}_1+d^{*}_2$ & $f^{*}_1+f^{*}_2$\\\hline
\textsf{Minus($T_1,T_2$)}         & $f_1-f_2$ & $\mathcal{L}_{1}+\mathcal{L}_{2}$ & $d^{*}_1+d^{*}_2$ & $f^{*}_1+f^{*}_2$\\\hline
\textsf{Times($T_1,T_2$)}         & $f_1\times f_2$ & \multicolumn{1}{l|}{$min\{$} & $d^{*}_1\times d^{*}_2$ & $f^{*}_1\times f^{*}_2$\\
        &  & \hspace*{0.3cm}$d^{*}_2\mathcal{L}_1+f^{*}_1\mathcal{L}_2,$ &  & \\
        &  & \hspace*{0.3cm}$f^{*}_2\mathcal{L}_1+d^{*}_1\mathcal{L}_2\}$ &  & \\\hline
\end{tabular}

\vspace*{0.2cm}
\textbf{Aggregation Operator}
\vspace*{0.2cm}

\newcolumntype{L}[1]{>{\raggedright\let\newline\\\arraybackslash\hspace{0pt}}m{#1}}
\newcolumntype{C}[1]{>{\centering\let\newline\\\arraybackslash\hspace{0pt}}m{#1}}

\renewcommand{\tabcolsep}{1mm}
\begin{tabular}{|L{2cm}||c||C{3.22cm}|} \hline
\textbf{Operator}                 & \cellcolor{gray!25} \textbf{Approximate} & \cellcolor{gray!25} \textbf{Estimated}  \\
                                  & \cellcolor{gray!25} \textbf{Output} & \cellcolor{gray!25} \textbf{Error}  \\\hline
\textsf{Sum(T,$\ell_s,\ell_e$)}   & $\sum_{i=\ell_s}^{\ell_e}f(i)$  &    $\mathcal{L}$ \\\hline
\end{tabular}

\vspace*{0.2cm}
\textbf{Arithmetic Operators}
\vspace*{0.2cm}
\renewcommand{\tabcolsep}{1mm}
\begin{tabular}{|l||c||c|} \hline
\textbf{Operator}                 & \cellcolor{gray!25} \textbf{Approximate} & \cellcolor{gray!25} \textbf{Estimated}  \\
				                  & \cellcolor{gray!25} \textbf{Output} & \cellcolor{gray!25} \textbf{Error}  \\\hline
\textit{Agg $+$ Number}             & $\hat{Agg} + Number$            &  $\hat{\varepsilon}$ \\\hline
\textit{Agg $-$ Number}             & $\hat{Agg} - Number$            &   $\hat{\varepsilon}$ \\\hline
\textit{Agg $\times$ Number}        & $\hat{Agg} \times Number$       &   $\hat{\varepsilon}\times number$ \\\hline
\textit{Agg $\div$ Number}          & $\hat{Agg} \div Number$         &   $\hat{\varepsilon}\div number$ \\\hline\hline
\textit{$Agg_a + Agg_b$}             & $\hat{Agg}_a + \hat{Agg}_b$            &  $\hat{\varepsilon}_a+\hat{\varepsilon}_b$ \\\hline
\textit{$Agg_a - Agg_b$}             & $\hat{Agg}_a - \hat{Agg}_b$            &   $\hat{\varepsilon}_a+\hat{\varepsilon}_b$ \\\hline
\textit{$Agg_a \times Agg_b$}        & $\hat{Agg}_a \times \hat{Agg}_b$       &   $\hat{Agg}_a\hat{\varepsilon}_b+\hat{Agg}_b\hat{\varepsilon}_a+\hat{\varepsilon}_a\hat{\varepsilon}_b$ \\\hline
\textit{$Agg_a \div Agg_b$}          & $\hat{Agg}_a \div \hat{Agg}_b$         &   $\frac{\hat{Agg}_a+\hat{\varepsilon}_a}{\hat{Agg}_b-\hat{\varepsilon}_b}-\frac{\hat{Agg}_a}{\hat{Agg}_b}$ \\\hline
\end{tabular}
\caption{Formulas for estimating answer and error for each algebraic operator (single segment).}
\label{fig:error_estimators}
\end{figure}

Without going into detail into each of them, we next explain how they can be used to compute the answer and corresponding error guarantees for an entire query through an example.

\begin{figure}[h]
\centering
\includegraphics[width=0.46\textwidth]{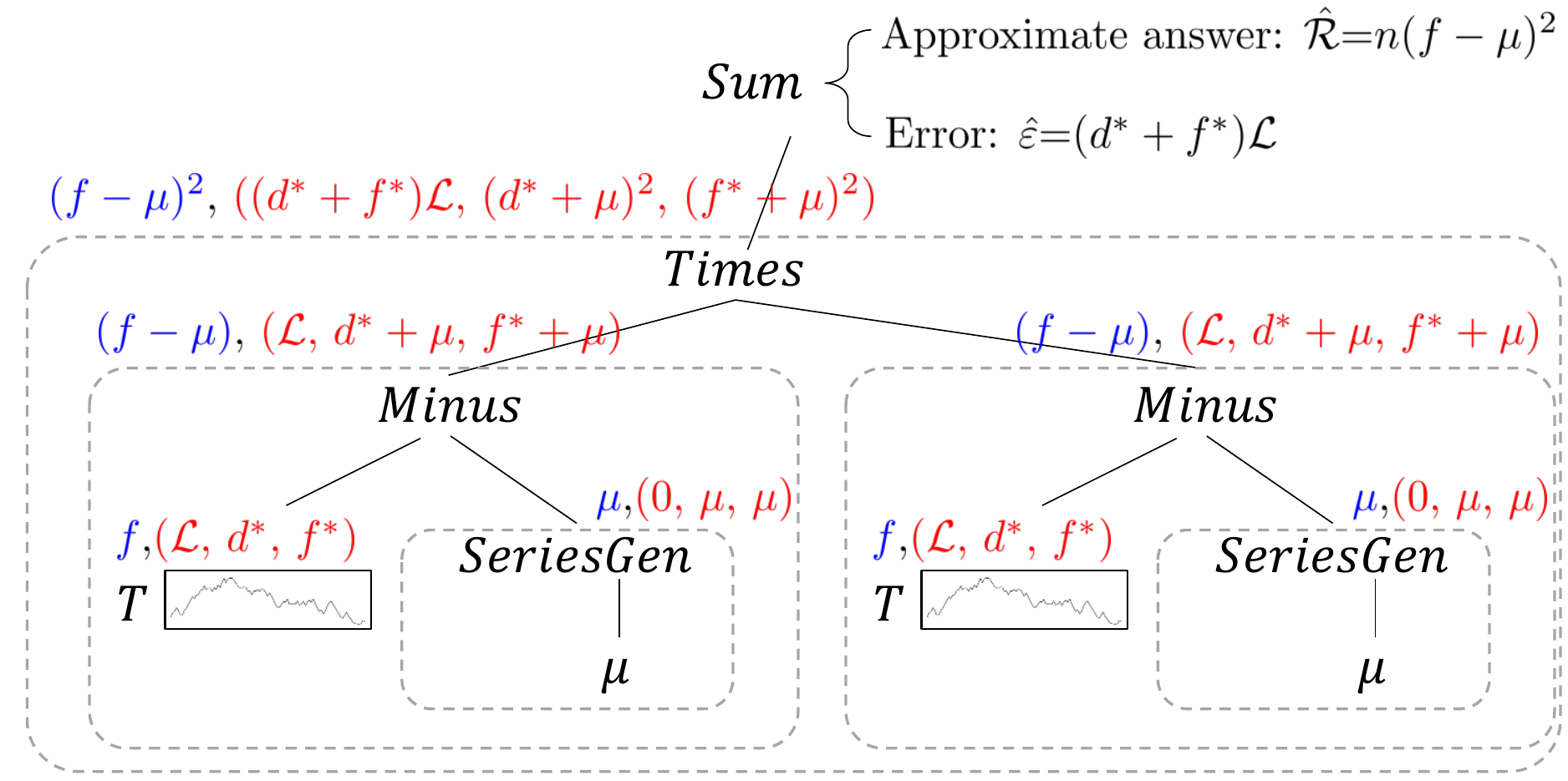}
\caption{Approximate query answer and associated error for query $Q=\textsf{Sum}(\textsf{Times}$ $(\textsf{Minus}(T,\textsf{SeriesGen}(\mu,n)),$ $\textsf{Minus}(T,\textsf{SeriesGen}(\mu,n)),1,n)$. Compression functions and error measures are shown in blue and red, respectively.}
\label{fig:error_estimator_example}
\end{figure}

\begin{example}
This example shows how to use the formulas in Figure~\ref{fig:error_estimators} to compute the approximate answer and associated error for a query computing the variance of a time series $T$ consisting of single segment $S$. For simplicity of the query expression we assume that the mean $\mu$ of $T$ is known in advance (note that even if $\mu$ was not known, the query would still be expressible in \db's query language, albeit through a longer expression). Let $f$ be the compression function and $(\mathcal{L}, d^*, f^*)$ the error measures of $S$. The query can be expressed as $Q=\textsf{Sum}(\textsf{Times}$ $(\textsf{Minus}(T,\textsf{SeriesGen}(\mu,n)),$ $\textsf{Minus}(T,\textsf{SeriesGen}(\mu,n)),1,n)$. Figure \ref{fig:error_estimator_example} shows how \db\ evaluates this query in a bottom-up fashion. It first uses the formula of the \textsf{SeriesGen} operator to compute the compression function ($f=\mu$) and error measures ($\mathcal{L}=0$, $d^{*}=\mu$, $f^{*}=\mu$) for the output of the \textsf{SeriesGen} operator. It then computes the compression function ($f-\mu$) and error measures ($\mathcal{L}$, $(d^{*}+\mu)$, $(f^{*}+\mu)$) for the output of the \textsf{Minus} operator. The computation continues in a bottom-up fashion, until \db\ computes the output of the \textsf{Sum} operator in the form of an approximate answer $\hat{\mathcal{R}}=n(f-\mu)^2$ where $n$ is the number of data points in $T$, and an estimated error $\hat{\varepsilon}=(d^*+f^*)\mathcal{L}$.
\end{example}

Importantly, the formulas shown in Figure \ref{fig:error_estimators} are guaranteed to produce the best error estimation out of any formula that uses the three error measures employed by \db\, as explained by the following theorem:

\begin{theorem}
\label{theorem:lower_bound}
The estimated errors produced through the use of the formulas shown in Figure \ref{fig:error_estimators} are the lowest among all possible error estimations produced by using the error measures described in Section \ref{sec:index}.$\Box$
\end{theorem}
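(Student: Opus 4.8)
The plan is to establish the theorem by separating, for every algebraic operator, a soundness direction from a tightness direction; soundness is already built into the guarantee $|\mathds{R}-\hat{\mathds{R}}|\le\hat{\varepsilon}$ claimed for Figure~\ref{fig:error_estimators}, so the real work is tightness. I would first make precise what ``lowest among all possible error estimations using the three measures'' means: fix the leaf values $(\mathcal{L}, d^*, f^*)$ together with the compression functions, and consider the \emph{feasible set} of all actual time series consistent with these values. Any error estimator that reads only these measures must, to remain sound, return a value no smaller than the true error of \emph{every} feasible instance; hence the smallest sound estimator is exactly the supremum of the true error over the feasible set. The theorem then reduces to showing that, for each operator, the formula in Figure~\ref{fig:error_estimators} equals this supremum.

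First I would dispatch the operators whose output is a single number (\textsf{Sum} and the arithmetic operators). Each such formula is obtained by a chain of triangle inequalities, and each inequality becomes an equality for a configuration in which the per-point residuals $d_i-f(i)$ all share one sign (for \textsf{Sum}) and in which the approximate sub-answers deviate from the truth in the extremal directions (for $\times$ and $\div$). For each formula I would exhibit one witnessing instance, check that it lies in the feasible set fixed by the given measures, and conclude that no sound estimator can report less. The division formula needs the extra remark that, under the nonnegativity convention used, the ratio is simultaneously maximized by raising the numerator by $\hat{\varepsilon}_a$ and lowering the denominator by $\hat{\varepsilon}_b$, so the stated expression is attained.

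Next I would treat the time-series operators (\textsf{SeriesGen}, \textsf{Plus}, \textsf{Minus}, \textsf{Times}), whose formulas propagate output measures rather than a final error; here tightness means each propagated $\mathcal{L}$, $d^*$, $f^*$ is itself realizable by some feasible input pair. For \textsf{Plus} and \textsf{Minus} this follows again from sign alignment of residuals and co-location of the maximizing indices. The \textsf{Times} operator is the crux: its output $\mathcal{L}$ is the minimum of $d_2^*\mathcal{L}_1+f_1^*\mathcal{L}_2$ and $f_2^*\mathcal{L}_1+d_1^*\mathcal{L}_2$, the two bounds coming from the two algebraically valid splittings of $d^{(1)}d^{(2)}-f_1f_2$ into a cross term plus a residual term; leaning on the derivation in Appendix~\ref{appendix:times}, I would construct an instance attaining the minimizing bound, which shows the $\min$ cannot be lowered while staying sound.

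The main obstacle is the compositional step: local optimality of each operator does not by itself give optimality for a whole query tree, because the witness that makes one operator tight need not be the witness that makes the operator above it tight. To close this gap I would induct from the leaves upward, showing that the measures propagated at every internal node are realizable and, crucially, that the extremal witness demanded by the parent can be chosen among the instances already realizing the child's measures, so that the worst cases \emph{stack} rather than conflict. Concretely this means verifying that the sign and extremal-index conditions required at successive levels are mutually satisfiable, which holds because each level constrains residuals and maximizing indices only in a single consistent direction. Checking this non-conflict through the \textsf{Times} operator, where two competing bounds meet at the $\min$, is the delicate point and is where I expect the argument to demand the most care.
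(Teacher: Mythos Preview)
Your plan is aligned with the paper's proof: both establish tightness operator-by-operator by exhibiting feasible instances on which the actual error attains the formula, so that no estimator reading only $(\mathcal{L},d^*,f^*)$ can soundly report less. The paper phrases this adversarially (assume an estimator returns $\hat{\varepsilon}-\alpha$, then build two instances with identical measures, one of which forces error $\hat{\varepsilon}$), whereas you phrase it as computing the supremum of the true error over the feasible set; these are the same argument. The paper also splits the aggregation case according to whether the query range covers the whole segment, a distinction you do not mention but would need when writing out the witnesses.

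Where you go beyond the paper is in flagging the compositional issue: local tightness at each operator does not automatically yield tightness for a full query tree unless the witnesses at different levels can be chosen compatibly. The paper's appendix handles this only loosely, by asserting in the arithmetic-operator case that the inductively obtained $\hat{\varepsilon}_1,\hat{\varepsilon}_2$ are themselves lower bounds and then sketching a single attaining instance. Your explicit induction on the query tree, checking that the sign and extremal-index constraints required at successive levels are mutually satisfiable, is a genuine strengthening of the paper's argument and is exactly where the proof needs care, particularly through \textsf{Times}.
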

The proof can be found in Appendix~\ref{appendix:lower_bound}.

\subsection{Multiple Segment Time Series}

Let us now consider the general case, where each time series $T$ contains multiple segments of varying different sizes. As a result of the varying sizes of the segments, segments of different time series may not fully align.

\begin{example}
\label{example:triggered_segment}
For instance consider the top two time series $T_{1}=(S_{1,1},S_{1,2})$ and $T_{2}=(S_{2,1},S_{2,2})$ of Figure~\ref{fig:aligned_segments} (ignore the third time series for now). Segment $S_{1,1}$ overlaps with both $S_{2,1}$ and $S_{2,2}$. Similarly, segment $S_{2,2}$ overlaps with both $S_{1,1}$ and $S_{1,2}$.
\end{example}

One may think that this can be easily solved by creating subsegments that are perfectly aligned and then using for each of them the answer and error estimation formulas of Section \ref{Sec:error_estimator_single}.

\begin{example}
Continuing our example, the two time series $T_{1}$ and $T_{2}$ can be split into the three aligned subsegments shown as the output time series $T_{3}$. Then for each of these output segments, we can compute the error based on the formulas of Section \ref{Sec:error_estimator_single}.
\end{example}

However, the problem with this approach is that the resulting error will be severely \emph{overestimated} as the error of a single segment of the original time series may be counted multiple times, as it overlaps with multiple output segments.

\begin{example}
For instance, for a query over the time series $T_{1}$ and $T_{2}$ of Figure~\ref{fig:aligned_segments},the error of $S_{2,2}$ will be double-counted, as it will be counted towards the error of the two output segments $S_{3,2}$ and $S_{3,3}$. \end{example}

To avoid this pitfall, \db\ does not estimate the error for its segment individually but instead computes the error holistically for the entire time series. Figures \ref{table:error_estimator_time_series_operator_2} and \ref{table:error_estimator_aggregation_operator_multiple} show the resulting answer and error estimation formulas for time series operators and the aggregation operator, respectively. The formulas of the arithmetic operators are omitted as they remain the same as in the single segment case, as the arithmetic operators take as input single numbers instead of time series and are thus not affected by multiple segments.



\begin{figure}[t]
\centering
\includegraphics[width=0.35\textwidth]{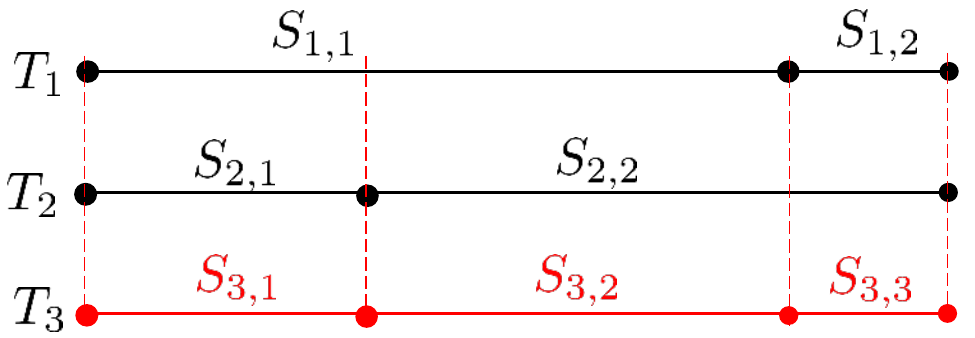}
\caption{Example of aligned time series segments. The new generated time series $T_3$ is shown in red color.}
\label{fig:aligned_segments}
\end{figure}


\begin{figure*}[t]
\small
\raggedright
\textbf{Time Series Operators}
\vspace*{0.2cm}
\renewcommand{\tabcolsep}{0.2mm}
\begin{tabular}{|l||A||c|c|c|} \hline
\textbf{Operator}                 & \cellcolor{gray!25} \textbf{Comp. func.} &\multicolumn{3}{c|}{\cellcolor{gray!25} \textbf{Output Error Measures}} \\\cline{2-5}
                                  & \cellcolor{gray!13} f & \cellcolor{gray!13} $\mathcal{L}$  &   \cellcolor{gray!13}  $d^*$     &  \cellcolor{gray!13} $f^*$\\\hline
\textit{SeriesGen($\upsilon, n$)} & $\upsilon$ &$0$            &    $\upsilon$& $\upsilon$ \\\hline
\textit{Plus($T_a,T_b$)}          & $\{(f_{c,1},...,f_{c,k})| f_{c,i}=f_{a,u}+f_{b,v}~i\in[1,k]\}$ & $\sum_{i=1}^{p}\mathcal{L}_{a,i}+\sum_{j=1}^{q}\mathcal{L}_{b,j}$ & $max\{d_{c,i}| d_{c,i}=d_{a,u}+d_{b,v}~i\in[1,k]\}$ & $max\{f_{c,i}| f_{c,i}=f_{a,u}+f_{b,v}~i\in[1,k]\}$\\\hline
\textit{Minus($T_a,T_b$)}         & $\{(f_{c,1},...,f_{c,k})| f_{c,i}=f_{a,u}-f_{b,v} ~i\in[1,k]\}$ & $\sum_{i=1}^{p}\mathcal{L}_{a,i}+\sum_{j=1}^{q}\mathcal{L}_{b,j}$ & $max\{d_{c,i}| d_{c,i}=d_{a,u}+d_{b,v}~ i\in[1,k]\}$ & $max\{f_{c,i}| f_{c,i}=f_{a,u}+f_{b,v}~ i\in[1,k]\}$\\\hline
\textit{Times($T_a,T_b$)}         & $\{(f_{c,1},...,f_{c,k})| f_{c,i}=f_{a,u}\times f_{b,v} \forall i\in[1,k]\}$ & $\mathcal{L}_{T_c}$ & $max\{d_{c,i}| d_{c,i}=d_{a,u}\times d_{b,v}~ i\in[1,k]\}$ & $max\{f_{c,i}| f_{c,i}=f_{a,u}\times f_{b,v}~ i\in[1,k]\}$\\\hline
\end{tabular}
\caption{Formulas for estimating answer and error for time series operators (multiple segments). For each output time series segment $S_{c,i}$, let $S_{a,u}$ and $S_{b,v}$ be the input segments that overlap with $S_{c,i}$.}
\label{table:error_estimator_time_series_operator_2}
\end{figure*}

\begin{figure}
\small
\raggedright
\textbf{Aggregation Operator}
\vspace*{0.2cm}
\renewcommand{\tabcolsep}{3.9mm}
\begin{tabular}{|l||c||c|} \hline
\textbf{Operator}                 & \cellcolor{gray!25} \textbf{Approximate} & \cellcolor{gray!25} \textbf{Estimated}  \\
                                  & \cellcolor{gray!25} \textbf{Output} & \cellcolor{gray!25} \textbf{Error}  \\\hline
\textit{Sum(T,$\ell_s,\ell_e$)}   & $\sum_{i=u}^{v}\sum_{j=1}^{|S_i|}f_i(j)$  &    $\sum_{i=u}^{v}\mathcal{L}_i$ \\\hline
\end{tabular}
\caption{Formulas for estimating answer and error for the aggregation operator (multiple segments).}
\label{table:error_estimator_aggregation_operator_multiple}
\end{figure}


%
%

\section{Navigating the \treeCapAll}
\label{sec:QueryProcessing}

So far we have seen how \db\ computes the approximate answer to a query and its associated error, assuming that the segments that are used for query processing have already been selected. In this Section, we explain how this selection is performed. In particular, we show how \db\ navigates the \tree s of the time series involved in the query to continuously compute better estimations of the query answer under the given error or time budget is satisfied.\\  


\noindent\textbf{Query Processing Algorithm.} Let $T_1,...,T_m$ be a set of time series and $\mathcal{I}_{1},...,\mathcal{I}_{m}$ the respective \tree s. Let also $Q$ be a query over $T_1,...,T_m$ and $\varepsilon_{max}$/$t_{max}$ an error/time budget, respectively. To answer $Q$ under the given budget, \db\ first starts from the roots of $\mathcal{I}_{1},...,\mathcal{I}_{m}$ and uses them to compute the approximate query answer $\hat{\mathds{R}}$ and corresponding error $\hat{\varepsilon}$ using the formulas presented in Section \ref{Sec:error_estimator}. If the estimated error is greater than the error budget (i.e., if $\hat{\varepsilon}\geq\varepsilon_{max}$) or if the elapsed time is smaller than the allowed time budget, \db\ chooses one of the tree nodes used above, replaces it with its children and repeats the above procedure using the newly selected nodes until the given error/time budget is reached. What is important is the criterion that is used to choose the node that is replaced at each step by its children. In general, \db\ will have to select between several nodes, as it will be exploring in which \tree\ and moreover in which part of the selected \tree\ it pays off to navigate further down. Since \db\ aims to reduce the estimated error as much as possible, at each step it greedily chooses the node whose replacement by its children leads to the biggest reduction in the estimated error. The resulting procedure is shown as Algorithm~\ref{alg:queryprocessing}~\footnote{Note that the algorithm is shown for both error and time budget case. In contrast to the case when a time budget is provided, in which the algorithm has to always keep a computed estimated answer $\hat{\mathds{R}}$ to return it when the time budget runs out, in the case of the error budget this is not required. Thus, in the latter case, it suffices to compute $\hat{\mathds{R}}$ only at the very last step of the algorithm, thus avoiding its iterative computation during the while loop.}.

\begin{algorithm}[h!]\small
\KwIn{\treeCap s $\mathcal{I}_{1},...,\mathcal{I}_{m}$, query $Q$, error budget $\varepsilon_{max}$ or time budget $t_{max}$}
\KwOut{Approximate answer $\hat{\mathds{R}}$ and error $\hat{\varepsilon}$}
Access the roots of $\mathcal{I}_{1},...,\mathcal{I}_{m}$;\\
Compute $\hat{\mathds{R}}$ and $\hat{\varepsilon}$ by using the compression functions and error measures of the currently accessed nodes (see Section~\ref{Sec:error_estimator} for details);\\
\While{$\hat{\varepsilon}>\varepsilon_{max}$ or $\text{elapsed time} < t_{max}$}
{
  Choose a node maximizing the error reduction;\\
  Update the current answer $\hat{\mathds{R}}$ and error $\hat{\varepsilon}$ using the compression functions and error measures of the currently accessed nodes; \label{alg:line5}


}
\textbf{Return} $(\hat{\mathds{R}}, \hat{\varepsilon})$;
\caption{\db{} Query Processing}
\label{alg:queryprocessing}
\end{algorithm}

\smallskip
\noindent\textbf{Algorithm Optimality.} Given its greedy nature, one may wonder whether the query processing algorithm is optimal. To answer this question, we have to first define optimality. Since the aim of the query processing algorithm is to produce the lowest possible error in the fastest possible time (which can be approximated by the number of nodes that are accessed), we say that an algorithm is optimal if for every possible query, set of \tree s, and error budget $\varepsilon_{max}$ it answers the query under the given budget accessing the lowest number of nodes than any other possible algorithm. Since a comparison of any possible algorithm is hard, we also restrict our attention to deterministic algorithms that access the \tree s in a top-down fashion (i.e, to access a node $N$ all its ancestor nodes should also be accessed). We denote this class of algorithms as $\mathbb{A}$. It turns out that no algorithm in $\mathbb{A}$ can be optimal as the following theorem states:

\begin{theorem}
\label{theorem:no_optimal}
There is no optimal algorithm in $\mathbb{A}$.
\end{theorem}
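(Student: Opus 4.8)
The plan is to prove the statement by an adversary argument: I fix an \emph{arbitrary} algorithm $A\in\mathbb{A}$ and construct a single instance (query, set of \tree s, error budget) on which $A$ accesses strictly more nodes than some other algorithm in $\mathbb{A}$. Since optimality requires a single algorithm to be best on \emph{every} instance, exhibiting one such defeating instance for each $A$ establishes that no member of $\mathbb{A}$ is optimal. The whole difficulty is pushed into making the roots \emph{indistinguishable}, so that a deterministic top-down algorithm is forced to commit to a first expansion before it can tell which \tree\ actually hides the cheap path to the budget; then I plant the reward in the \tree\ that $A$ does \emph{not} pick.

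Concretely, I would work with the query $Q=\textsf{Sum}(\textsf{Plus}(T_1,T_2),1,n)$, for which the formulas of Figure~\ref{fig:error_estimators} give estimated error $\hat\varepsilon=\mathcal{L}_1+\mathcal{L}_2$. First I would build two \tree s whose roots carry \emph{identical} stored information: the same compression function and the same triple $(\mathcal{L},d^*,f^*)=(L,c,c/2)$, realized for instance by root segments with the same maximum, the same PAA value, and the same total deviation $L$. Because the root information is identical for both series, no algorithm in $\mathbb{A}$ can distinguish the two \tree s from the root level alone, so $A$'s deterministic rule selects a fixed \tree\ to expand first; without loss of generality call it $T_1$ (otherwise relabel). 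I then plant the structures \emph{below} those identical roots asymmetrically: in $T_2$ I use a series such as $(0,\dots,0,c,\dots,c)$ whose optimal split yields two children with $\mathcal{L}=0$ each, while in $T_1$ I use an oscillatory series (matching the same root statistics) whose single optimal split only reduces the distance from $L$ to $L-\epsilon$. Realizability of both behaviors under identical root measures is routine, since splitting only ever decreases $\mathcal{L}$.

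Setting the budget to $\varepsilon_{max}=L$, the starting error is $2L>\varepsilon_{max}$, so every algorithm must expand at least once. The optimal choice here is to expand the root of $T_2$, yielding $\hat\varepsilon=L+0=L\le\varepsilon_{max}$ after accessing exactly the two roots plus the two children of $T_2$, i.e.\ $4$ nodes; and there is plainly an algorithm in $\mathbb{A}$ (one whose tie-break expands the other \tree) achieving this, so $\mathrm{OPT}=4$ on this instance. By contrast $A$, forced by determinism to expand $T_1$ first, spends two accesses on $T_1$'s children and only reaches $\hat\varepsilon=(L-\epsilon)+L=2L-\epsilon>\varepsilon_{max}$; it must therefore continue, and the cheapest way for it to finish is to additionally expand $T_2$'s root, giving at least $6$ accessed nodes. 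Hence $A$ accesses strictly more than $\mathrm{OPT}$, and since $A$ was arbitrary, the theorem follows.

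The step I expect to be the main obstacle is the \emph{indistinguishability} argument: I must argue carefully that the first expansion of \emph{every} deterministic top-down algorithm, not merely the greedy one, depends only on the information visible at the roots, so that fixing identical root triples genuinely forces $A$'s first move independently of where the reward is planted. Secondary care is needed to confirm that $A$ cannot reach the budget more cheaply through $T_1$ alone (reducing $T_1$ to zero requires full expansion, which costs far more than the $6$ nodes above) and to handle the exact form of the stopping test ($\hat\varepsilon>\varepsilon_{max}$ versus $\ge$), so that the $4$-node solution indeed halts while the $2$-node root-only configuration does not.
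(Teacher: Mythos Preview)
Your adversary argument is correct and suffices for the theorem. The idea---make the roots indistinguishable so that any deterministic top-down algorithm must commit to a first expansion before seeing where the cheap reduction lies, then plant the reward in the other tree---is sound, and your explicit realizations (the block series $(0,\dots,0,c,\dots,c)$ versus the alternating one) do share identical root triples $(\mathcal{L},d^*,f^*)$ under PAA while behaving completely differently under a single split. The indistinguishability step you flag as the main obstacle is in fact routine once stated precisely: by the top-down access constraint the algorithm's state after reading the two roots is a deterministic function of the query and the root contents alone, so its first expansion choice is fixed before any child is revealed; the adversary simulates that choice and plants accordingly.

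The paper's proof is a different construction. Instead of two indistinguishable roots, it uses a \emph{single} nontrivial tree of height $h$ in which every expansion yields a negligible error reduction of $2^{-h}$, except for one distinguished node whose expansion reduces the error by $h+1$; the budget is set so that only the distinguished expansion can terminate the algorithm. Because an algorithm in $\mathbb{A}$ cannot see where that node sits until it reaches it, the adversary places it wherever $A$ would look last, forcing $A$ to traverse essentially the whole tree while the optimal path is just the root-to-node branch. Your construction is more elementary (a constant-size instance and a $4$-versus-$6$ gap), whereas the paper's construction, at the cost of a more elaborate tree, additionally shows that the competitive gap can be made arbitrarily large. For the bare statement of Theorem~\ref{theorem:no_optimal}, either argument is enough.
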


\begin{proof}
Consider the following \tree s of two time series $T_1$ and $T_2$. The \tree\ of $T_1$ is shown in Figure~\ref{fig:no_optimal} and the \tree\ of $T_2$ is a tree containing a single node. Now consider a query $Q$ over these two time series and an error budget $\varepsilon=h-1$ where $h > 1$ is the height of the $T_1$'s tree. Assume that the query error using the tree roots is $\varepsilon_{root}=2h$. Also assume that whenever the query processing algorithm replaces a node by its children, the  error for the query is reduced by $\frac{1}{2^h}$ with the exception of the shaded node, which, when replaced by its children, leads to an error reduction of $h+1$. This means that the query processing algorithm can only terminate after accessing the children of the shaded node, as the query error in that case will be at most $2h-(h+1)=h-1$. Otherwise, the error estimated by the algorithm will be at least $2h-2^h(\frac{1}{2^h})=2h-1>h-1$, which exceeds the error budget and thus does not allow the algorithm to terminate. Since the shaded node can be placed at an arbitrary position in the tree, for every given deterministic algorithm, we can place the shaded node in the tree, so that the algorithm accesses the children of the shaded node only after it has accessed all the other nodes in the tree. However, this is suboptimal, as there is a way to access the children of the shaded node with fewer node accesses (i.e., by following the path from the root to the shaded node). Therefore, no algorithm in $\mathbb{A}$ is optimal.
\end{proof}
\begin{figure}[h]
\centering
\includegraphics[width=0.35\textwidth]{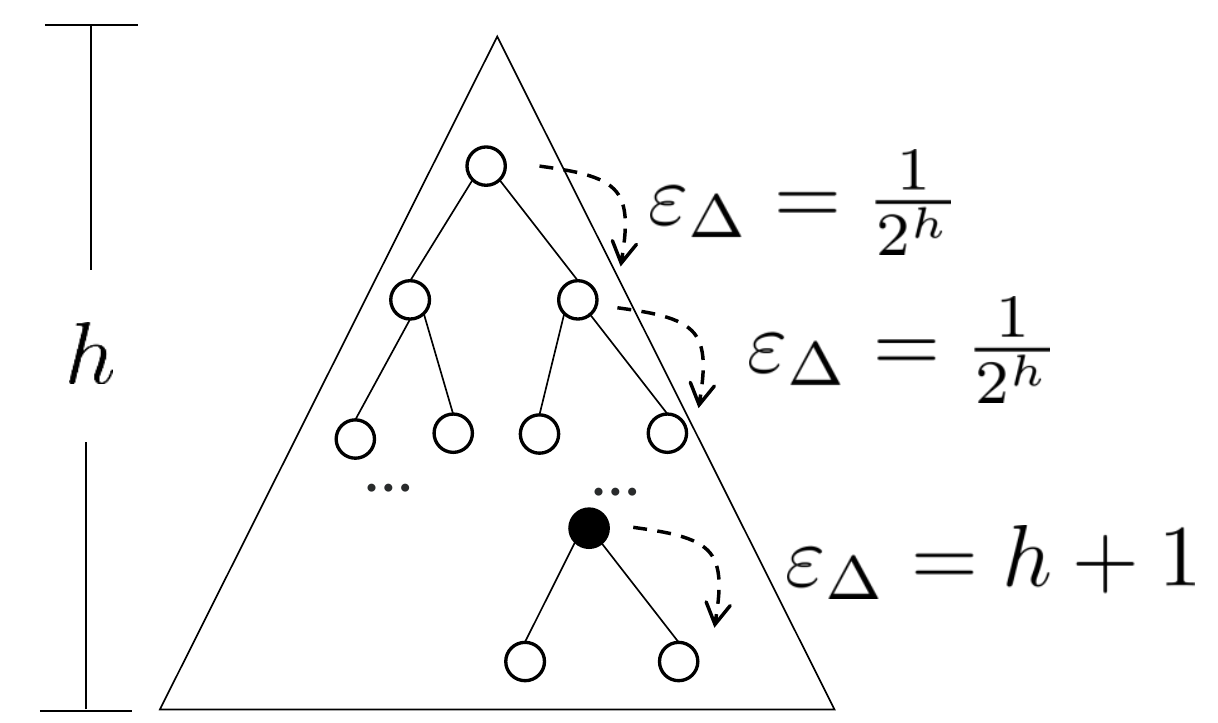}
\caption{\treeCap\ for Theorem~\ref{theorem:no_optimal}.}
\label{fig:no_optimal}
\end{figure}

As a result of the above theorem, \db's query processing algorithm cannot be optimal in general. However, we can show that it is optimal for \tree s that exhibit the following property: For every pair of nodes $N$ and $N'$ of the \tree, such that $N'$ is a descendant of $N$, the error reduction $\varepsilon_{\Delta}(N)$ achieved by replacing $N$ with its children is greater or equal to the error reduction $\varepsilon_{\Delta}(N')$ achieved by replacing $N'$ with its children. Such a tree is called \emph{fine-error-reduction} tree and intuitively it guarantees that any node leads to a greater or equal error reduction than any of its descendants. If all trees satisfy the above property, \db's query processing algorithm is optimal:   

\begin{theorem}
\label{theorem:greedy_optimal}
In the presence of \tree s that are \textit{fine-error-reduction} trees, \db's query processing algorithm is optimal.
\end{theorem}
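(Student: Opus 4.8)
My plan is to recast optimality as a purely combinatorial statement about weighted order ideals in a forest and then prove it by an exchange argument that exploits the fine-error-reduction property. First I would fix the cost model: since the algorithm accesses the $m$ roots and then, at every step, replaces one frontier node by its two children, after expanding a set $E$ of internal nodes it has accessed exactly $m + 2|E|$ nodes. Hence ``accessing the fewest nodes'' is equivalent to ``performing the fewest expansions,'' so I only need to reason about $|E|$. I would also record the additive structure of the error that is implicit in the definition of $\varepsilon_\Delta$: writing $\varepsilon_{root}$ for the error obtained from the roots alone, the error after expanding a top-down-closed set $E$ of internal nodes is $\varepsilon_{root} - \sum_{N\in E}\varepsilon_\Delta(N)$, with each $\varepsilon_\Delta(N)\ge 0$. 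Thus reaching the budget $\varepsilon_{max}$ amounts to choosing a top-down-closed $E$ with $\sum_{N\in E}\varepsilon_\Delta(N)\ge \Delta := \varepsilon_{root}-\varepsilon_{max}$, and the goal is to do this with $|E|$ minimal.

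The first substantive step is to observe that greedy expands nodes in non-increasing order of error reduction. When greedy expands the current best frontier node $N$ (of reduction $r_i$), every other frontier node already had reduction $\le r_i$, and the two children it exposes have reduction $\le \varepsilon_\Delta(N)=r_i$ by the fine-error-reduction property; hence the new frontier maximum is $\le r_i$, giving $r_{i+1}\le r_i$. So greedy's sequence $r_1\ge r_2\ge\cdots$ is already sorted.

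The crux, which I expect to be the main obstacle, is the following lemma: for every $k$, the first $k$ nodes greedy expands form a top-down-closed set of \emph{maximum} total reduction among all top-down-closed sets of size $k$. I would prove it by an exchange argument. Let $O$ be any top-down-closed set with $|O|=k$, and let $o_1\ge\cdots\ge o_k$ be its reductions sorted in non-increasing order; I claim $r_i\ge o_i$ for every $i$, which immediately yields $\sum_i r_i\ge\sum_i o_i$. Suppose not, and take the least index $i$ with $o_i>r_i$. The heavy subset $O_{\ge}=\{N\in O : \varepsilon_\Delta(N)\ge o_i\}$ has at least $i$ elements, and---this is the key use of fine-error-reduction---it is itself top-down-closed, since the parent of any node weakly dominates it in reduction and already lies in $O$. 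Because greedy has expanded only $i-1$ nodes before step $i$, $O_{\ge}$ cannot be contained in that prefix, so I can pick a node $N\in O_{\ge}$ all of whose proper ancestors greedy has already expanded (such $N$ exists by minimality, precisely because $O_{\ge}$ is top-down-closed). Then $N$ lies on greedy's frontier at step $i$ and has reduction $\ge o_i>r_i$, contradicting the fact that $r_i$ is the frontier maximum at that step.

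Finally I would assemble the conclusion. Let $k^\ast$ be the number of expansions greedy performs. By the lemma greedy attains the maximum possible total reduction at every size, so after $k^\ast-1$ expansions its reduction is still $<\Delta$; and since reductions are nonnegative the per-size optimum is monotone in size, so no top-down-closed set of size $\le k^\ast-1$ can reach $\Delta$. Consequently every algorithm in $\mathbb{A}$ that meets the budget must expand at least $k^\ast$ nodes and hence access at least $m+2k^\ast$ nodes, which is exactly greedy's count; greedy is therefore optimal on fine-error-reduction trees. The only delicate points beyond the exchange lemma are bookkeeping: confirming the $m+2|E|$ node-count identity, and confirming that the additive, state-independent reading of $\varepsilon_\Delta$ is exactly what the fine-error-reduction hypothesis provides.
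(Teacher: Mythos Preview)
The paper states this theorem without proof, so there is nothing in the paper to compare your argument against. Your proposal, however, is a correct and complete proof under the natural reading of the hypothesis.

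The core of your argument---the exchange lemma showing that greedy's first $k$ expansions maximize total error reduction among all top-down-closed sets of size $k$---is sound, and invoking the fine-error-reduction property to show that the heavy set $O_{\ge}$ is itself top-down-closed is exactly the right use of the hypothesis. The monotonicity of greedy's picks ($r_1\ge r_2\ge\cdots$), the minimal-element selection of a frontier witness inside $O_{\ge}$, and the final counting that ties expansion count to node-access count via $m+2|E|$ are all correct.

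One caveat you already flag deserves sharper wording: the additive, state-independent model $\varepsilon(E)=\varepsilon_{root}-\sum_{N\in E}\varepsilon_\Delta(N)$ is not a \emph{consequence} of the fine-error-reduction hypothesis; it is a modeling assumption the paper itself adopts implicitly by treating $\varepsilon_\Delta(N)$ as a quantity attached to $N$ alone. For queries involving \textsf{Times} across two trees, the paper's own incremental-update formulas show that the reduction from expanding a node in one tree can depend on the current segmentation of the other, so $\varepsilon_\Delta(N)$ is not literally state-independent in the system's full generality. Your proof is correct relative to the paper's framing, but you should say plainly that you are taking $\varepsilon_\Delta$ as a fixed node weight (as the paper does when it defines fine-error-reduction) rather than suggesting this is something the hypothesis \emph{provides}.
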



\smallskip
\begin{table}[h!]
\small
\centering
\renewcommand{\tabcolsep}{1mm}
\begin{tabular}{|l|B|} \hline
\textbf{Operator} & \textbf{Incremental Error Update} \\\hline
\textit{Plus($T_a,T_b$)}   &  $\hat{\varepsilon'}=\hat{\varepsilon}-(\mathcal{L}_a-(\mathcal{L}_{a.1}+\mathcal{L}_{a.2}))$\\\hline
\textit{Minus($T_a,T_b$)}  &  $\hat{\varepsilon'}=\hat{\varepsilon}-(\mathcal{L}_a-(\mathcal{L}_{a.1}+\mathcal{L}_{a.2}))$\\\hline
\textit{Times($T_a,T_b$)}  &  $\hat{\varepsilon'}=\hat{\varepsilon}-(max(p_{b,1},...,p_{b,k})\mathcal{L}_a - max(p_{b,1},...,p_{b,i})\mathcal{L}_{a.1} + max(p_{b,i},...,p_{b,k})\mathcal{L}_{a.2})$\\\hline
\end{tabular}
\caption{Incremental update of estimated errors for time series operators. $p_{b,i}\in\{d^*_{b,i}, f^*_{b,i}\}$.}
\label{table:error_estimator_update}
\end{table}
\noindent\textbf{Incremental Error Update.}
Having proven the optimality of the algorithm for fine-error-reduction trees, we will next discuss an optimization that can be employed to speedup the algorithm. By studying the algorithm, it is easy to observe that as the algorithm moves from a set $\mathcal{N} = \{N_1, \ldots, N_n\}$ of nodes to a set $\mathcal{N'} = \{N_1$, $\ldots$, $N_{a-1}$, $N_{a.1}$, $N_{a.2}$, $N_{a+1}$, $\ldots$, $N_{n}\}$ of nodes (by replacing node $N_a$ by its children $N_{a.1}$ and $N_{a.2}$), it recomputes the error using all nodes in $\mathcal{N'}$, although only the two nodes $N_{a.1}$ and $N_{a.2}$ have changed from the previous node set $\mathcal{N}$.

This observation led to the incremental error update optimization of \db's query processing algorithm described next. Instead of recomputing from scratch the error of $N'$ using all nodes, \db\ incrementally updates the error of $N$ by using only the error measures of the newly replaced node $N_{a}$ and the newly inserted nodes $N_{a.1}$ and $N_{a.2}$. Let $(\mathcal{L}_a, d^{*}_a, f^{*}_a)$, $(\mathcal{L}_{a.1}, d^{*}_{a.1}, f^{*}_{a.1})$, and $(\mathcal{L}_{a.2}, d^{*}_{a.2}, f^{*}_{a.2})$ be the error measures of nodes $N_a$, $N_{a.1}$, and $N_{a.2}$, respectively. Assume that the segments $S_{b,1},...,S_{b,k}$ overlap with the segment of node $N_{a}$, the segments $S_{b,1},...,S_{b,i}$ $(i\leq k)$ overlap with the segment of node $N_{a.1}$, and the segments $S_{b,i},...,S_{b,k}$ overlap with the segment of node $N_{a.2}$. Then the estimated error $\hat{\varepsilon'}$ using nodes $N_{a.1}$ and $N_{a.2}$ can be incrementally computed from the error $\hat{\varepsilon}$ using node $N_a$ through the incremental error update formulas shown in Table~\ref{table:error_estimator_update}\footnote{The \textit{SeriesGen} operator is omitted, since its input is not a time series and as a result there is no \tree\ associated with its input.}.\\

\noindent\textbf{Probabilistic Extension.}
While \db\ provides deterministic error guarantees, which as we discussed above are in many cases required, it is interesting to note that it can be easily extended to provide probabilistic error guarantees if needed. Most importantly this can be done simply by changing the error measures computed for each segment from $(\mathcal{L}, d^*, f^*)$ to $(\sigma_{\varepsilon}, \varepsilon^*, f^*)$, where $\sigma_{\varepsilon}$ is the variance of $d_i-f(i)$, and $\varepsilon^*$ is the maximal absolute value of $d_i-f(i)$. Then we can employ the Central Limit Theorem (CLT)~\cite{dudley1999uniform} to bound the accurate error $\varepsilon$ by $Pr(\varepsilon\leq\hat{\varepsilon})\geq 1-\alpha$, where $\alpha$ can be adjusted by the users to get different confidence levels. It is interesting that the rest of the system, including the hierarchical structure of the \tree\ and the tree navigation algorithm employed at query processing time do not need to be modified. In our future work we plan to further explore this probabilistic extension and compare it to existing approximate query answering techniques with probabilistic guarantees.
\section{Experimental Evaluation}
\label{experiments}

To evaluate \db's performance and verify our hypothesis that \db\ is able to provide significant savings in the query processing of sensor data, we are conducting experiments on real sensor data. We present here early data points that we have discovered.\\

\noindent\textbf{Datasets.} For our preliminary experiments, we used two real sensor datasets:

\begin{enumerate}
  \item \emph{Intel Lab Data (ILD)}\footnote{\url{http://db.lcs.mit.edu/labdata/labdata.html}}. Smart home data (humidity and temperature) collected at 31-second intervals from 54 sensors deployed at the Intel Berkeley Research Lab between February 28th and April 5th, 2004. The dataset contains about 2.3 million tuples (i.e., 4.6 million sensor readings in total).
  \item \emph{EPA Air Quality Data (AIR)}\footnote{\url{https://www.epa.gov/outdoor-air-quality-data}}. Air quality data collected at hourly intervals from about 1000 sensors from January 1st 2000 to April 1st 2016. The dataset contains about 133 million tuples (i.e., 266 million sensor readings in total).
\end{enumerate}

From each dataset we extracted multiple time series, each corresponding to a single attribute of the dataset; Humidity and Temperature for ILD and Ozone and SO$_2$ for AIR. We then used \db\ to create the corresponding segment tree for each time series and to answer queries over them.\\


\noindent\textbf{Experimental platform}. All experiments were performed on a computer with a 4th generation Intel i7-4770 processor ($4\times 32$ KB L1 data cache, $4 \times 256$ KB L2 cache, $8$ MB shared L3 cache, $4$ physical cores, $3.6$ GHz) and $16$ GB RAM,
running Ubuntu 14.04.1. All the algorithms were implemented in C++ and compiled with g++ 4.8.4, using -O3 optimization. All data was stored in main memory.

\subsection{Experimental Results}

In our preliminary evaluation, we measured two quantities: First, the size of the segment tree created by \db, since this segment tree is stored in main memory, and second, the query processing performance of \db\ compared to a system that answers queries using the entirety of the raw sensor data. In our future work, we will be conducting a more thorough evaluation of the system. We next present our preliminary results:

\begin{table}[htp]
\small
\centering
\renewcommand{\tabcolsep}{0.7mm}
\begin{tabular}{|c|c|c|c|c|} \hline
\textbf{Dataset}&\textbf{$\#$ Tuples}&\textbf{Raw Data} & \multicolumn{2}{c|}{\textbf{Segment Tree}}\\
                &                    &              & \textbf{(0-degree)}& \textbf{(1-degree)} \\\hline
          ILD   & 2,313,153	        & 35.29 MB       & 0.14 MB & 0.67 MB\\\hline
          AIR   & 133,075,510       & 1.98 GB        & 4.37 MB & 8.11 MB\\\hline
\end{tabular}
\caption{Raw data and segment tree sizes.}
\label{table:size}
\end{table}

\noindent\textbf{Segment tree size.}
Table~\ref{table:size} shows the size of the raw data and the combined size of the segment trees built for all the time series extracted from the ILD and AIR datasets.\footnote{To make a fair comparison, the raw data size refers only to the combined size of the attributes used in the time series and does not include other attributes that exist in the original dataset (such as location codes etc).} We experimented with two different compression functions, resulting in different segment tree sizes; a 0-degree polynomial (corresponding to the Piecewise Aggregate Approximation \cite{KeoghCPM01}, where each value within a segment is approximated through the average of the values in the segment) and a 1-degree polynomial (corresponding to the Piecewise Linear Approximation \cite{keogh1997fast}, where each segment is approximated through a line). 
As shown, the segment trees are significantly smaller than the  raw sensor data (about  $0.40\% - 1.90\%$ and $0.22\% - 0.40\%$ smaller for the ILD and AIR datasets, respectively). As a result, the segment trees of the time series can be easily kept in main memory, even when the system stores a large number of time series.\\

\begin{figure}[h]
\centering
\begin{minipage}{.46\textwidth}
\begin{tabular}{cc}
\includegraphics[width=.47\textwidth]{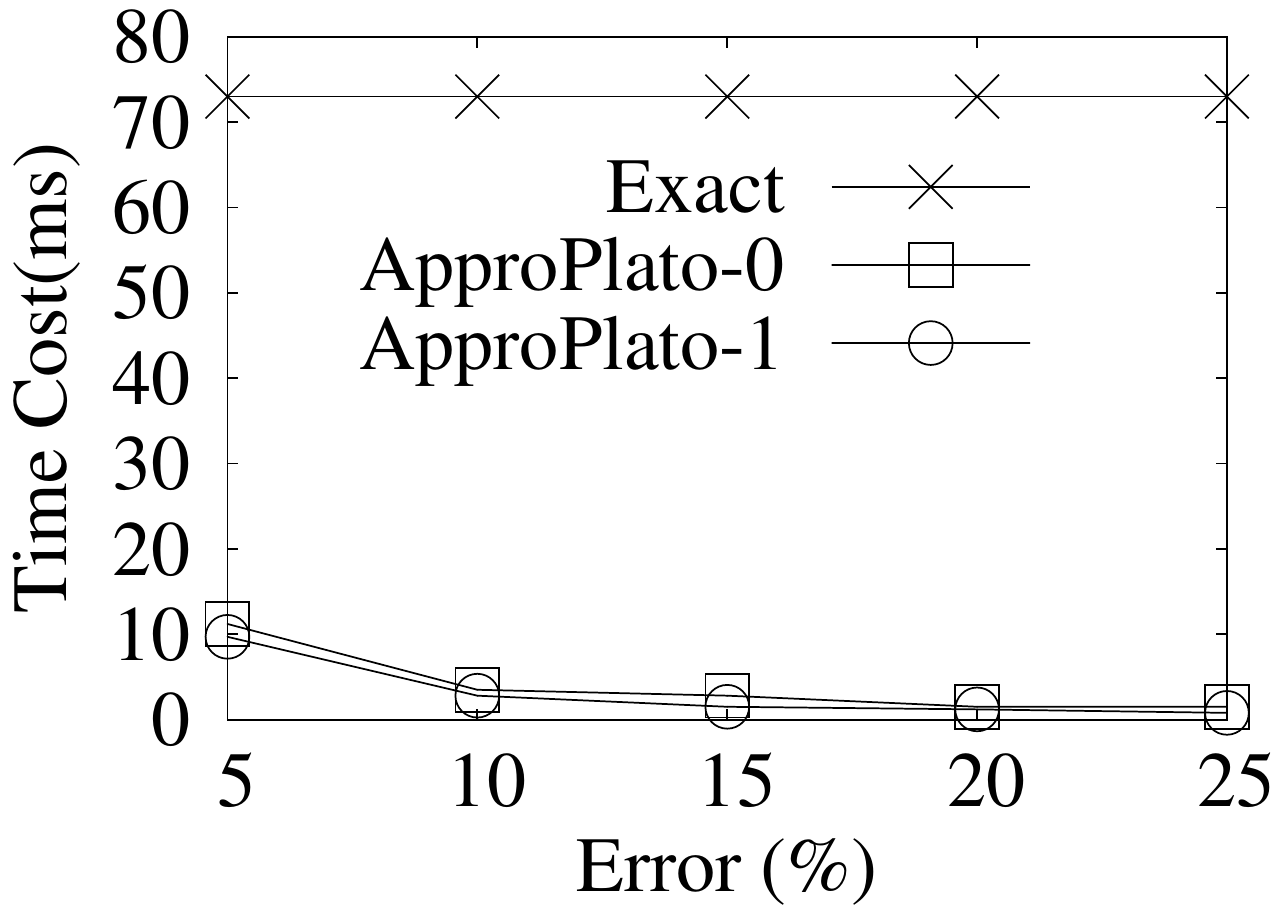}&
\includegraphics[width=.47\textwidth]{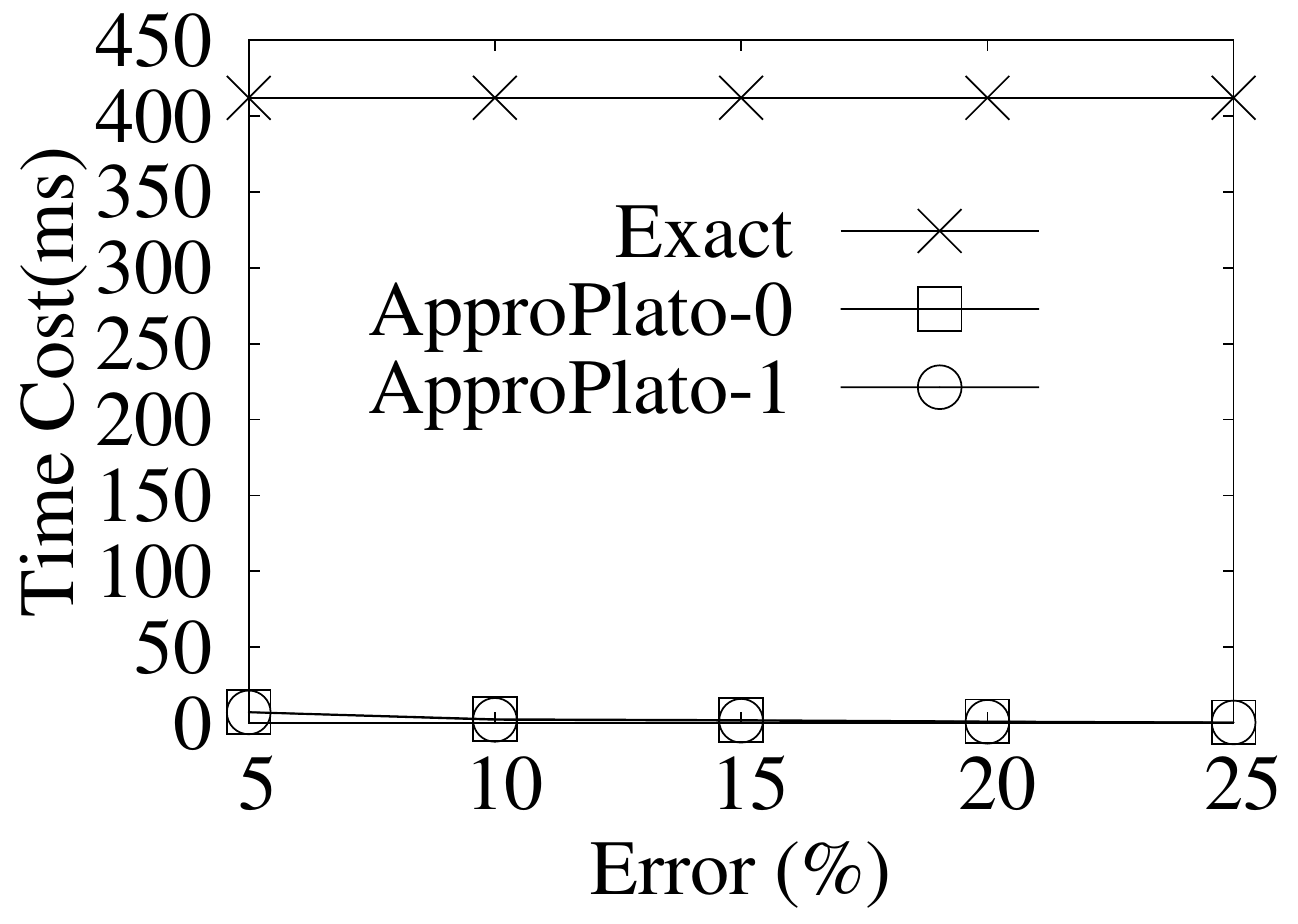}\\
\small{(a) ILD} & \small{(b) AIR}
\end{tabular}
\caption{Query processing performance for correlation query (time shown in ms).}
\label{fig:time}
\end{minipage}
\end{figure}

\noindent\textbf{Query processing performance.}
We next compared the query processing performance of \db\ against a baseline, which is a custom in-memory algorithm that computes the exact answer of the queries using the raw data. To compare the systems, we measured the time required to process a correlation query between two time series (i.e., correlation(Humidity, Temperature) in ILD and correlation(Ozone and SO$_2$) in AIR)) with a varying error budget (ranging from 5\% to 25\%). Figure~\ref{fig:time} shows the resulting times for each of the two datasets. Each graph depicts the performance of three systems; \emph{Exact}, which is the baseline method of answering queries over the raw data, and \emph{\db-0}, \emph{\db-1}, which are instances of \db\ using the 0-degree and 1-degree polynomial compression functions, as explained above.


By studying Figure~\ref{fig:time}, we can make the following observations.
\begin{itemize}
  \item Both instances of \emph{\db} outperform \emph{Exact} by one to three orders of magnitude, depending on the provided error budget.
  \item In contrast to \emph{Exact} which always uses the entire raw dataset to compute exact query answers, \emph{\db} allows the user to select the appropriate tradeoff between time spent in query processing and resulting error by specifying the desired error budget. The system adapts to the budget by providing faster responses as the allowed error budget increases;
  \item Notably, \emph{\db} remains significantly faster than \emph{Exact} even for small error budgets. In particular, \db{} is over $9\times$ and $37\times$ faster than \emph{Exact} when the error is 5\% in ILD and AIR respectively.
\end{itemize}

In summary, our preliminary results show that \db{} shows significant potential for speeding up query processing of ad hoc queries over large amounts of sensor data, as it outperforms exact query processing algorithms in many cases by several orders of magnitude. Moreover, it can provide such speedups, while providing deterministic error guarantees, in contrast to existing sampling-based approximate query answering approaches that provide only probabilistic guarantees, which may not hold in practice. Despite the difference in guarantees, in our future work we will be conducting a more thorough evaluation of the system comparing it also against sampling-based systems.


\section{Related Work}


Approximate query answering has been the focus on an extensive body of work, which we will summarize next. However, to the best of our knowledge, this is the first work that provides deterministic guarantees for aggregation queries over multiple time series.\\

\noindent\textbf{Approximate query answering with probabilistic error guarantees.} Most of the existing work on approximate query processing has focused on using \emph{sampling} to compute approximate query answers by appropriately evaluating the queries on small samples of the data~\cite{JermaineAPD07,AgarwalMPMMS13,WuOT10,BabcockDM04,PansareBJC11,PansareBJC11}. Such approaches typically leverage statistical inequalities and the central limit theorem to compute the confidence interval or variance of the computed approximate answer. As a result, their error guarantees are probabilistic. While probabilistic guarantees are often sufficient, there are not suitable for scenarios where one wants to be certain that the answer will fall within a certain interval~\footnote{Note that as discussed in Section \ref{sec:QueryProcessing}, \db\ can also be extended to provide probabilistic guarantees when deterministic guarantees are not required, simply by modifying the error measures computed for each segment.}.

A special form of sampling-based methods are \emph{online aggregation} approaches, which provide a continuously improving query answer, allowing users to stop the query evaluation when they are satisfied with the resulting error  ~\cite{hellerstein1997online,CondieCAHES10,PansareBJC11}. With its hierarchical \tree, \db\ can support the online aggregation paradigm, while providing deterministic error guarantees.\\

\noindent\textbf{Approximate query answering with deterministic error guarantees.} Approximately answering queries while providing deterministic error guarantees has so far received only very limited attention~\cite{PottiP15,LazaridisM01,PoosalaIHS96}. Existing work in the area has focused on simple aggregation queries that involve a single relational table. In contrast, \db\ provides deterministic error guarantees on queries that may involve multiple time series (each of which can be though of as a single relational table), enabling the evaluation of many common statistics that span tables, such as correlation, cross-correlation and others.\\

\noindent\textbf{Approximate query answering over sensor data.} Moreover, \db\ is one of the first approximate query answering systems that leverage the fact that sensor data are not random but follow a usually smooth underlying phenomenon. The majority of existing works on approximate query answering looked at general relational data. Moreover, the ones that studied approximate query processing for sensor data, focused on the networking aspect of the problem, studying how aggregate queries can be efficiently evaluated in a distributed sensor network~\cite{madden2002tag,considine2004approximate,ConsidineLKB04}.
While these works focused on the networking aspect of sensor data, our work focuses on the continuous nature of the sensor data, which it leverages to accelerate query processing even in a single machine scenario, where historical sensor data already accumulated on the machine have to be analyzed.\\

\noindent\textbf{Data summarizations.} Last but not least, there has been extensive work on creating summarizations of sensor data. Work in this area has come mostly from two different communities; from the database community~\cite{IoannidisP95,PoosalaIHS96,PapapetrouGD12,Ting16}
and the signal processing community~\cite{KeoghCPM01,keogh2001locally,keogh1997fast,ChanF99,FaloutsosRM94,FaloutsosRM94}. 

The database community has mostly focused on creating summarizations (also referred to as synopses or sketches) that can be used to answer specific queries. These include among others histograms \cite{IoannidisP95,PoosalaIHS96,GibbonsMP97,PoosalaGI99} (e.g., EquiWidth and EquiDepth histograms \cite{ShapiroC84}, V-Optimal histograms~\cite{IoannidisP95}, Hierarchical Model Fitting (HMF) histograms~\cite{WangS08}, and Compact Hierarchical Histograms (CHH) \cite{ReissGH06}), as well as sampling methods~\cite{HaasS92,ChenY17}, used among other for cardinality estimation~\cite{IoannidisP95} and selectivity estimation~\cite{PoosalaIHS96}.
%
%
In contrast to such special-purpose approaches, \db\ supports a large class of queries over arbitrary sensor data.

The signal processing community on the other hand, produced a variety of methods that can be used to compress time series data. These include among others the Piecewise Aggregate Approximation (PAA)~\cite{KeoghCPM01}, the  Adaptive Piecewise Constant Approximation (APCA)~\cite{keogh2001locally}, the Piecewise Linear Representation (PLR)~\cite{keogh1997fast}, the Discrete Wavelet Transform (DWT)~\cite{ChanF99}, and the Discrete Fourier Transform (DFT)~\cite{FaloutsosRM94}. However, it has not been concerned on how such compression techniques can be used to answer general queries. \db's modular architecture allows the easy incorporation of such techniques as compression functions, that are then automatically leveraged by the system to enable approximate answering of a large number of queries with deterministic error guarantees.

\eat{



\subsection{Approximate query processing over sensor data}
Approximate query processing over sensor data has been studied for many years ~\cite{madden2002tag,considine2004approximate,ConsidineLKB04}. They focus on (1) reducing the communication cost, and (2)avoiding the issues of packet loss and node failures, when processing aggregation queries, e.g., Count and Sum, over distributed environment. For example, the state-of-the-art systems, e.g., TAG system~\cite{madden2002tag} and SKETCH~\cite{considine2004approximate}, adapt \textit{in-network aggregation} to reduce communication overhead by pushing part of the aggregation to some intermediate sensor nodes. \db{} is different from them since \db{} focuses on single machine environment and processes complex queries over compressed time series. In addition, the existing work does not provide error guarantees while \db does.


\subsection{Approximate query processing}
Beyond sensor data, approximate query processing has been the subject of extensive research in many areas, including databases, data mining, and web management. The existing work can be classified into two categories, i.e., \textit{Probabilistic approximate query processing} and \textit{Deterministic approximate query processing}. \db{} is in the second category.

\noindent\textbf{Probabilistic approximate query processing}
One common technique is to use \textit{sampling} to
provide approximate answers by running queries on small sample sets. Existing sampling approaches~\cite{JermaineAPD07,AgarwalMPMMS13,WuOT10,BabcockDM04,PansareBJC11,PansareBJC11}
focus on randomized joins~\cite{JermaineAPD07}, optimal sample
construction~\cite{AgarwalMPMMS13}, sample reusing~\cite{WuOT10}, and sampling plan in a stream setting~\cite{BabcockDM04}. Most of them use statistical inequalities and the central limit theorem to model the confidence interval or variance of the approximate aggregate answers. They usually focus on simple aggregation queries, e.g., \cite{PansareBJC11} limits to simple group-by aggregate queries, but \db{} can handle most common statistic queries.

A special case of sampling methods is online aggregation~\cite{hellerstein1997online,CondieCAHES10,PansareBJC11}. It proposes the idea of providing approximate
answers which are constantly refined during query execution. It provides an interface for users to stop execution once the current accuracy is satisfied by users. As a result, their executions often cannot stop early and have to process almost all data before the error bound requirement can be satisfied.

Sampling-based methods provide probabilistic error guarantees (some have no guarantees), however, \db{} provides absolute (deterministic) error guarantees. In addition, \db{} can answer complex queries that most sampling-based systems cannot.

An other branch is to use \textit{histogram} to provide approximate answers by running queries on precomputed histograms.

E.g., \cite{GibbonsMP97} estimates quantiles under a different error metric, but their algorithm requires multiple passes over the data. Similarly, Chaudhuri, Motwani,
and Narsayya [3] require multiple passes and only provide
probabilistic guarantees.


\noindent\textbf{Deterministic approximate query processing}
The other direction is to provide deterministic approximate answers~\cite{PottiP15,LazaridisM01} by using summarized information. However, none of them deals with compressed data and they only focus on simple aggregation queries. \db{} operates on compressed time series and supports complex queries that differentiate it from previous work. For instance, a recent work, DAQ~\cite{PottiP15}, proposes a deterministic approximate querying scheme. It uses novel bit-sliced indices and evaluates a query on the first few bits of all the rows as an approximation. Though both \db{} and DAQ provide deterministic error guarantees, they are two orthogonal systems. The difference between them are as follows:
\begin{itemize}
  \item \db{} focuses on compressed time series data while DAQ works on uncompressed relational data.
  \item \db{} supports more complex aggregation queries, e.g., correlation and cross-correlation, than DAQ.
  \item \db{} uses segment trees indices while DAQ uses bit-sliced indices.
\end{itemize}

\subsection{Time Series Compression}
For the ease and efficiency time series analytics, suitable choice of time series compression/representation schemes is an critical factor.  With this in mind, a great number of time series compression/representation methods have been proposed, including the Piecewise Aggregate Approximation (PAA) ~\cite{KeoghCPM01}, the  Adaptive Piecewise Constant Approximation (APCA)~\cite{keogh2001locally}, the Piecewise Linear Representation (PLR)~\cite{keogh1997fast}, the Discrete Wavelet Transform (DWT)~\cite{ChanF99}, and the the Discrete Fourier Transform (DFT)~\cite{FaloutsosRM94}.

Note that, \db{} can apply all the existing time series compression methods. It always return guaranteed error bounds, since the error measures maintained by \db{} are independent to the compression methods.



\subsection{Synopses}
There has been a great deal of work on ``synopses'', e.g., wavelets, histograms, and sketches. In general, these techniques are tightly tied to specific classes of queries.  For instance, Haar wavelets are used for  SUM/COUNT aggregation queries~\cite{VitterW99}. Therefore, these techniques are most applicable when future queries are known in advance. However, \db{} can return approximate answer for ad hoc queries and even provide error guarantees.

}

\section{Conclusion}
In this paper, we proposed the \db\ system that allows users the efficient computation of approximate query answers to queries over sensor data. By utilizing the novel \tree\ data structure, \db\ creates at data import time a set of hierarchical summarizations of each time series, which are used at query processing time to not only enable the efficient processing of queries over multiple time series with varying error/time budgets but to also provide error guarantees that are deterministic and are therefore guaranteed to hold, in contrast to the multitude of existing approaches that only provide probabilistic error guarantees. Our preliminary results show that the system can in real use cases lead to several order of magnitude improvements over systems that access the entire dataset to provide exact query answers. In our future work, we plan to perform a thorough experimental evaluation of the system, in order to both study the behavior of the system in different datasets and query workloads, as well as to compare it against systems that provide probabilistic error guarantees.

{\small
\bibliographystyle{abbrv}
\bibliography{References}
}

\normalsize
\appendix
\section{Proofs}
\label{appendix:proofs}

\subsection{Error measures for the Times operator (Single Segment)}
\label{appendix:times}
 Let $f^{(1)}$ and $f^{(2)}$ be the compression functions of $T_1=(d_1^{(1)},...,d_n^{(1)})$ and $T_2=(d_1^{(2)},...,d_n^{(2)})$ respectively.  Let $(\mathcal{L}_1, d^*_1, f^*_1)$ and $(\mathcal{L}_2, d^*_2, f^*_2)$ be the error measures for time series $T_1$ and $T_2$. For \textsf{Times}($T_1,T_2$)$\rightarrow$ $T$ operator, the compression function $f$ and the error measures $(\mathcal{L}, d^*, f^*)$ for the output time series $T=(d_1,...,d_n)$ are computed as follows:
\begin{itemize}
  \item $f=f^{(1)}\times f^{(2)}$, i.e., the product of two compression functions.
  \item $\mathcal{L}=\sum_{i=1}^n|d_i-f(i)|=\sum_{i=1}^n|d_i^{(1)}d_i^{(2)}-f^{(1)}(i)f^{(2)}(i)|$. There are two options to transform this expression. \\
      Option 1: $\mathcal{L}=\sum_{i=1}^n|d_i^{(1)}d_i^{(2)}-d_i^{(1)}f^{(2)}(i)+d_i^{(1)}f^{(2)}(i)-f^{(1)}(i)f^{(2)}(i)|$ $=$ $\sum_{i=1}^n|d_i^{(1)} (d_i^{(2)}-f^{(2)}(i))+f^{(2)}(i)(d_i^{(1)}-f^{(1)}(i))|$ $\leq$ $f^*_2\mathcal{L}_1+d^*_1\mathcal{L}_2$.\\
      Option 2: $\mathcal{L}=\sum_{i=1}^n|d_i^{(1)}d_i^{(2)}-d_i^{(2)}f^{(1)}(i)+d_i^{(2)}f^{(1)}(i)-f^{(1)}(i)f^{(2)}(i)|$ $=$ $\sum_{i=1}^n|d_i^{(2)} (d_i^{(1)}-f^{(1)}(i))+f^{(1)}(i)(d_i^{(2)}-f^{(2)}(i))|$ $\leq$ $d^*_2\mathcal{L}_1+f^*_1\mathcal{L}_2$.\\
      Thus, we choose the minimal one between these two options. That is $\mathcal{L} = min\{f^*_2\mathcal{L}_1+d^*_1\mathcal{L}_2, d^*_2\mathcal{L}_1+f^*_1\mathcal{L}_2\}$.
  \item $d^*=max\{|d_i|~|1\leq i\leq n\}=max\{|d_i^{(1)}d_i^{(2)}|~|1\leq i\leq n\}$$\leq$ $d^*_1\times d^*_2$.
  \item $f^*=max\{|f(i)|~|1\leq i\leq n\}=max\{|f_i^{(1)}f_i^{(2)}|~|1\leq i\leq n\}$$\leq$ $f^*_1\times f^*_2$.
\end{itemize}

\subsection{Proof of the optimality of the error estimation formulas of Figure \ref{fig:error_estimators}}
\label{appendix:lower_bound}
\noindent\textbf{\underline{Aggregation operator.}}
Depending on whether it is a single segment time series, there are two cases.

\noindent\textbf{Case 1}. A time series $T$ (with $n$ data points) contains only one single segment. There are two subcases depending on whether $T$ is entirely used in the query or not. That is $\ell_e-\ell_s=n$ or not.

\textbf{Case 1.1}. $\ell_e-\ell_s=n$. In this case, the error $\varepsilon=\sum_{i=\ell_s}^{\ell_e}|d_i-f(i)=\sum_{i=1}^{n}|d_i-f(i)|$. And we have $\mathcal{L}=\sum_{i=1}^{n}|d_i-f(i)|$.
Therefore, we can get $\varepsilon=\mathcal{L}$. It means that by using $\mathcal{L}$ we are able to get the accurate error (the optimal error estimation). As desired.

\textbf{Case 1.2}. $\ell_e-\ell_s<n$.  Assume there exists an error estimator $A$ that gives an approximate error $\tilde{\varepsilon}$, where $\tilde{\varepsilon}=\mathcal{L}-\alpha$, where $\alpha>0$ is a small value. Let $T$ be time series segment with length $n$ such that $d_i=f(i)$ for $i\in[1, n-(\ell_e-\ell_s)-1]$ and $d_i=f(i)+1$ for $i\in[n-(\ell_e-\ell_s),n]$. Thus $\mathcal{L}=\sum_{i=1}^{n}|d_i-f(i)| = \ell_e-\ell_s$. For a query with range $[1, \ell_e-\ell_s]$, $A$ gives the approximate error as $\ell_e-\ell_s-\alpha$, which is correct as the accurate error is $0$. Now we switch the points $d_i$ with $d_j$ (as well as $f(i)$ with $f(j)$) for $i\in [1,\ell_e-\ell_s], j\in [n-(\ell_e-\ell_s),n]$ to generate a new segment $T'$. Note that, $T'$ and $T$ have the same $\mathcal{L}$. Now the accurate error is also $\mathcal{L}=\ell_e-\ell_s$. However, $A$ still gives the approximate error as $\ell_e-\ell_s-\alpha$, which is incorrect as it produces smaller error than the optimal one. Therefore, there does not exist an estimator that produces approximate errors less than that of our estimator, which means our estimator achieves the lower bound. As desired.

\noindent\textbf{Case 2}.  A time series $T$ contains multiple segments $S_1,..., S_n$. Note that, segments $S_2,...,S_{n-1}$ are all always entirely used by the query. According to case 1.1, our estimator gives the optimal error estimation. For the left-most and right-most segments, i.e., $S_1$ and $S_n$, our estimator achieves the lower bound of error estimation according to case 1.2. As desired.

\smallskip
\noindent\textbf{\underline{Plus and Minus operators.}}
Similar to the proof presented above, it is easy to see our estimator achieves the lower bound. Otherwise, there must exist an incorrect error estimation.

\smallskip
\noindent\textbf{\underline{Times operator.}}
We distinguish between two cases.\\
\noindent\textbf{Case 1}. Time series $T_1$ with $n$ data points (resp. $T_2$) only contains one segment. Depending on whether  $T_1$ and $T_2$ are entirely used. There are two subcases.

\textbf{Case 1.1}. $T_1$ and $T_2$ are entirely used. The accurate error is $\varepsilon=\sum_{i=1}^{n}(d_i^{(1)}\times d_i^{(2)} - f^{(1)}(1)\times f^{(2)}(i))$. Let the data point in $T_1$ have the following features $d_i^{(1)}=d_{i+1}^{(1)}-1$, $f^{(1)}(i)=f^{(1)}(i+1)-1$   and $f^{1}(i)=d_i^{(1)}-1$ for $i\in[1,n]$. Let  $T_2$ have the same data. Thus, $\mathcal{L}^{(1)}=\sum_{i=1}^{n}|d_i^{(1)}-f^{(1)}(i)| = n$ and $\mathcal{L}^{(2)} = n$, $d_1^*=d_n^{(1)}=d_2^{*}=d_n^{(2)}$ and $f_1^*=f^{(1)}(n)=f_2^*=f^{(2)}(n)$. So the estimated error of our estimator is \[\hat{\varepsilon}=min\{f^*_2\mathcal{L}_1+d^*_1\mathcal{L}_2, d^*_2\mathcal{L}_1+f^*_1\mathcal{L}_2\}\]
\[=n(d_n^{(1)}+\hat{d}_n^{(1)})=n(2\hat{d}_n^{(1)}+1)\]

Assume there exists an error estimator $A$ that produces an approximate error $\hat{\varepsilon}'$, where $\hat{\varepsilon}'=\hat{\varepsilon}-\alpha$, where $\alpha>0$ is a small value. Since the accurate error $\varepsilon=\sum_{i=1}^{n}(2f{(1)}(i)+1)<\hat{\varepsilon}'$. $A$ returns the correct estimation. Then we make $d_i^{(1)}=d_{i+1}^{(1)}$ and $f{(1)}(i)=f^{(1)}(i+1)$  for $i\in[1,n]$ in both two time series. Then the error measures of both time series stay the same. Now, the accurate error is  $\varepsilon=\sum_{i=1}^{n}(2\hat{d}_i^{(1)}+1)=n(2\hat{d}_n^{(1)}+1)=\hat{\varepsilon}$. That is our estimator produces the optimal error (meaning the error is equal to the accurate one). But $A$ still gives the same estimation $\hat{\varepsilon}-\alpha$, which is incorrect. So $A$ does not exist. As desired.

\textbf{Case 1.2}.  $T_1$ and $T_2$ are not entirely used. The proof is similar to the case 1.2 in the proof of aggregation operator above by making segments that $d_i^{(1)}\neq f^{(1)}(i)$ and $d_i^{(2)}\neq f^{(2)}(i)$ only happen in the query range, which makes $\hat{\varepsilon}=\varepsilon$. So our estimator achieves the lower bound.

\noindent\textbf{Case 2}. $k$ segments $S_1^{(1)},...,S_k^{(1)}$ in time series $T_1$ overlapping one segment $S^{(2)}$ in time series $T_2$ are used in the query with range [a,b]. The proof is similar to that in Case 1.1 and Case 1.2.

 \noindent\textbf{Case 3}. $k_1$ segments $S_1^{(1)},...,S_{k_1}^{(1)}$ in time series $T_1$ overlapping $k_2$ segments  $S_1^{(2)},...,S_{k_2}^{(2)}$  in time series $T_2$ are used in the query. The proof is similar to that in Case 2.

\smallskip
\noindent\textbf{\underline{Arithmetic Operators}}
For arithmetic operator $Ar_1\otimes\ Ar_2$. There are three cases depending on the number of approximate answers $Agg$ in the expression.

\noindent\textbf{Case 1}. Zero $Agg$, i.e., both $Ar_1$ and $Ar_2$ are numbers.  $Ar_1\otimes Ar_2$ can be transformed as $number_1\otimes number_2$ Then the answers are accurate answers. As desired.

\noindent\textbf{Case 2}. One $Agg$, i.e., $Ar_1\otimes Ar_2$ can be transformed as $Agg\otimes number$.  Let $\hat{Agg}$ and $\hat{\varepsilon}$ be the output approximate answer of $Agg$ and the estimated error by \db. Therefore, we know that $\hat{\varepsilon}$ is the lower bound of $|Agg-\hat{Agg}|$. For $Agg+number$, the error is $|Agg+number-(\hat{Agg}+number)|=|Agg-\hat{Agg}|$. Thus,   $\hat{\varepsilon}$ is also the lower bound of $Agg+number$. Similarly, we can prove the lower bound property of the errors for $\{-,\times,\div\}$ operators.

\noindent\textbf{Case 3}. Two $Agg$, i.e., $Ar_1\otimes Ar_2$ can be transformed as $Agg_1\otimes Agg_2$. Let $\hat{Agg}_1$ (resp. $\hat{Agg}_2$)  and $\hat{\varepsilon}_1$ (resp. $\hat{\varepsilon}_2$) be the output approximate answer of $Agg_1$ (resp. $Agg_2$) and the estimated errors provided by \db, respectively. According to the previous proof, we know that $\hat{\varepsilon}_1$ and $\hat{\varepsilon}_2$ are the lower bound of $|Agg_1-\hat{Agg}_1|$ and $|Agg_2-\hat{Agg}_2|$ respectively. It is obvious that $\hat{\varepsilon}_1+\hat{\varepsilon}_2$ is the lower bound error of  $Agg_1+Agg_2$ and $Agg_1-Agg_2$. For $Agg_1\times Agg_2$, we have $|Agg_1\times Agg_2- \hat{Agg}_1\times \hat{Agg}_2|\leq \hat{Agg}_1\hat{\varepsilon}_2+\hat{Agg}_2\hat{\varepsilon}_1+\hat{\varepsilon}_1\hat{\varepsilon}_2$. We can prove it is the lower bound by constructing a case that the accurate error is equals to this one, which means there does not exist a better estimation. Similarly, for $Agg_1\div Agg_2$, we can prove $\frac{\hat{Agg}_1+\hat{\varepsilon}_1}{\hat{Agg}_2-\hat{\varepsilon}_2}-\frac{\hat{Agg}_1}{\hat{Agg}_2}$ is lower bound error.

\end{document}